\providecommand{\ud}{\,\mathrm{d}}
\providecommand{\abs}[1]{\left\lvert#1\right\rvert}
\providecommand{\vect}[1]{{\mathbf{#1}}}
\providecommand{\R}{\mathbb{R}}
\providecommand{\calA}{{\mathcal{A}}}
\providecommand{\calB}{{\mathcal{B}}}
\providecommand{\calP}{{\mathcal{P}}}
\providecommand{\SCE}{{\mathrm{SCE}}}
\providecommand{\ee}{\mathrm{ee}}
\newtheorem{theorem}{Theorem}[section]
\newtheorem{definition}[theorem]{Definition}
\newtheorem{corollary}[theorem]{Corollary}
\begin{document}

\title{N-density representability and the optimal transport limit of the Hohenberg-Kohn functional}

\author{Gero Friesecke}
\affiliation{Mathematics Department, Technische Universit\"at M\"unchen, Garching bei M\"unchen, Germany}

\author{Christian B. Mendl}
\affiliation{Mathematics Department, Technische Universit\"at M\"unchen, Garching bei M\"unchen, Germany}

\author{Brendan Pass}
\affiliation{Department of Mathematical and Statistical Sciences, University of Alberta, Canada}

\author{Codina Cotar}
\affiliation{Department of Statistical Science, University College London, London, UK}

\author{Claudia Kl\"uppelberg}
\affiliation{Mathematics Department, Technische Universit\"at M\"unchen, Garching bei M\"unchen, Germany}

\begin{abstract}
We derive and analyze a hierarchy of approximations to the strongly correlated limit of the Hohenberg-Kohn functional. These ``density representability approximations'' are
obtained by first noting that in the strongly correlated limit, $N$-representability of the pair density reduces to the requirement that the pair density must come from
a symmetric $N$-point density. One then relaxes this requirement to the existence of a representing symmetric $k$-point density with $k<N$. The approximate energy can be computed by
simulating a fictitious $k$-electron system. We investigate the approximations by deriving analytically exact results for a $2$-site model problem, and by incorporating them into a self-consistent Kohn-Sham calculation for small atoms. We find that the low order representability conditions already capture the main part of the correlations.
\end{abstract}

\date{\today}

\maketitle

\section{Introduction}
\label{sec:Intro}

Kohn-Sham (KS) density functional theory (DFT) is currently the most widely used ab initio electronic structure model 
which is applicable to large and complex systems ranging from condensed matter over surfaces to nanoclusters and
biomolecules \cite{}. With the advent of linear scaling algorithms, the key factor limiting the accuracy of predictions
is the choice of underlying exchange-correlation functionals \cite{PerdewBurkeErnzerhof1996}. These functionals model the correlation structure of the system as a universal
functional of its underlying one-body density. While highly successful in many instances, these functionals exhibit known failures, both specific, 
like incorrect filling order and lack of binding of certain transition metal atoms \cite{Y00} or doubtful equilibrium geometries
of Carbon clusters \cite{MEF1996}, and general, like Van der Waals forces not being predicted.

It has long been recognized that important insight can be gained by studying the asymptotic relationship between correlation structure and one-body density in scaling limits \cite{Dirac, KohnSham1965, Becke1988}. In this paper we focus on the strongly correlated limit of the exact Hohenberg-Kohn functional first investigated in Ref.~\onlinecite{Seidl1999, SPL1999, SGS2007} in which electron repulsion dominates over kinetic energy (yielding a natural counterpart to the Kohn-Sham kinetic energy functional~\cite{KohnSham1965}).

The resulting limit, which can be interpreted as an optimal transport problem \cite{CFK1, BPG2012}, is still unwieldy from a computational point of view, since it requires 
the computation of the full $N$-point density of an $N$-electron system, a function on $\R^{3N}$. Here we give a simple but we believe fruitful
reformulation as a minimization problem over 2-point densities subject to a representability constraint. This is similar to the well known formulation of the full quantum $N$-body problem via representable 2-point density matrices \cite{Coleman2000, Mazziotti2007}, but an important difference is that here it is only required that the 2-point density
arises from a symmetric $N$-point density rather than from an antisymmetric, spin-dependent $N$-particle wavefunction. We therefore speak of \emph{$N$-density representability}. 
Density representability no longer mirrors the fermionic nature of electrons, reflecting the fact that a ``semi-classical'' limit has been taken of the Hohenberg-Kohn functional.

We then establish a natural hierarchy of necessary representability conditions, and investigate the accuracy of the resulting reduced models as compared to the full
strongly correlated limit. We focus on two test cases: first, a simple but illuminating 2-site, $N$-particle model in which all representability conditions can be computed explicitly; and second, ab initio as well as self-consistent densities for the atoms He, Li, Be. A tentative conclusion is that the low order representability conditions already capture the main part of the correlations, at significantly reduced computational cost.

\section{Strongly correlated limit of the Hohenberg-Kohn functional}
\label{sec:StrongCorrHK}

The following counterpart to the Kohn-Sham kinetic energy functional was introduced in Ref.~\onlinecite{Seidl1999, SPL1999, SGS2007}: 
\begin{equation} \label{eq:limFHK}
	V_{\ee}^{\SCE}[\rho] = \inf_{\Psi\mapsto\rho}\Bigl\langle \Psi | \widehat{V}_{\ee} | \Psi\Bigr\rangle
\end{equation}
Here the minimization is over electronic wavefunctions $\Psi = \Psi(\vect{x}_1,s_1,\dots,\vect{x}_N,s_N)$ which depend on $N$ space and spin coordinates and belong to the usual space $\calA_N$ of square-integrable antisymmetric normalized wavefunctions with square-integrable gradient, and the notation $\Psi\mapsto\rho$ means that $\Psi$ has single-particle density $\rho$. The acronym SCE stands for strictly correlated electrons \cite{SGS2007}. 
Here we briefly review known theoretical properties of $V_{\ee}^{\SCE}$ and previous approaches to compute it numerically.

Alternative constructions which plausibly yield the same functional are:\\
i) semiclassical limit of the Hohenberg-Kohn functional: 
\begin{equation}
	\label{eq:alternative1}
	\lim_{\hbar\to 0} \min_{\Psi\mapsto\rho, \, \Psi\in\calA_N} \Bigl\langle \Psi | \hbar^2 \, \widehat{T} + \widehat{V}_{\ee} | \Psi \Bigr\rangle,
\end{equation}
ii) minimization over spinless bosonic wavefunctions $\Phi$:
\begin{equation}
	\label{eq:alternative2}
	\inf_{\Phi\mapsto\rho, \, \Phi\in\calB_N} \Bigl\langle\Phi | \widehat{V}_{\ee} | \Phi \Bigr\rangle,
\end{equation}
iii) minimization over $N$-point probability measures:
\begin{equation}
	\label{eq:alternative3}
	\min_{\rho_N\mapsto\rho, \, \rho_N\in\calP_N^{\text{sym}}} \int_{\R^{3N}} V_{\ee} \, \rho_N.
\end{equation}
Here $\calB_N$ denotes the analogue of the space $\calA_N$ for spinless symmetric (bosonic) wavefunctions, $\calP_N^{\text{sym}}$ stands for the set of symmetric probability 
measures on $\R^{3N}$, and $\widehat{V}_{\ee}$ is the multiplication operator with the interaction potential
\begin{equation}
\label{eq:Vee}
	V_{\ee}(\vect{x}_1,\dots,\vect{x}_N) = \sum_{1 \le i < j \le N} v_{\ee}(\vect{x}_i,\vect{x}_j),
\end{equation}
where $v_{\ee}(\vect{x},\vect{y}) = \abs{\vect{x}-\vect{y}}^{-1}$. 
Formulae \eqref{eq:alternative1}, \eqref{eq:alternative2} appear in Ref.~\onlinecite{SGS2007}, and \eqref{eq:alternative1}, \eqref{eq:alternative3} are implicit in Ref.~\onlinecite{Seidl1999}.

The minimum value in \eqref{eq:alternative1} with $\hbar=1$ is the exact Hohenberg-Kohn functional $F^{\mathrm{HK}}$ in the Levy-Lieb constrained search formulation;
so expression \eqref{eq:alternative1} is the semiclassical limit of the Hohenberg-Kohn functional. 
Note that the Kohn-Sham kinetic energy functional $T^{\mathrm{KS}}$ is obtained from $F^{\mathrm{HK}}$ by instead retaining the kinetic
energy operator $\widehat{T}$ and neglecting the interaction term $\widehat{V}_{\ee}$. Expression \eqref{eq:alternative2} is related to \eqref{eq:limFHK} by neglecting antisymmetry and spin, and to expression \eqref{eq:alternative3} by first noting that $\langle\Phi|\widehat{V}_{\ee}|\Phi\rangle=\int V_{\ee} \, \abs{\Phi}^2$ and then replacing squares of spinless symmetric wavefunctions by their mathematical ``closure'', symmetric probability measures. 

Equality between the four expressions \eqref{eq:limFHK} -- \eqref{eq:alternative3} was conjectured in Ref.~\onlinecite{SGS2007} and has recently been justified mathematically \cite{CFK2}. 

As noticed in Ref.~\onlinecite{CFK1, BPG2012}, the last expression, \eqref{eq:alternative3}, has the form of an \emph{optimal transport problem}. In the standard setting of such problems\cite{Villani2008} originating from economics, one has $N=2$, $\rho_2(\vect{x},\vect{y})$ corresponds to the amount of ``mass'' 
transported from $\vect{x}$ to $\vect{y}$, $V_{\ee}(\vect{x},\vect{y})$ is the ``cost'' of this transport, the one-body densities of $\vect{x}$ and $\vect{y}$ would be different from each other
but prescribed a priori, i.e., $\int \rho_2(\vect{x},\vect{y}) \ud \vect{y} = \rho^A(\vect{x})$ and $\int \rho_2(\vect{x},\vect{y}) \ud \vect{x} = \rho^B(\vect{y})$, and minimization of $\int V_{\ee} \, \rho_2$ amounts to
finding the most economical way of transporting the pile of mass $\rho^A$ to $\rho^B$.
In economics, the cost would typically increase rather than decrease with distance, prototypical
examples being $\abs{\vect{x}-\vect{y}}$ or $\abs{\vect{x}-\vect{y}}^2$.

An interesting feature of minimizers is that they typically concentrate on lower-dimensional sets (see Fig.~\ref{fig:rho2marginal}). For $N = 2$, these sets have the form $\vect{y} = \vect{T}(\vect{x})$. Physically, this reflects the fact that
given the position of the first electron, the position of the second electron becomes deterministic in the strongly correlated limit \eqref{eq:limFHK}. When $\rho$ is radially
symmetric, $\vect{T}$ is known explicitly in terms of the inverse of the radial distribution function $R\mapsto 4\pi \int_0^R r^2\rho(r) \ud r$.\cite{SPL1999, Seidl1999, SGS2007, CFK1, BPG2012}

\begin{figure}[!ht]
\centering
\includegraphics[width=\columnwidth]{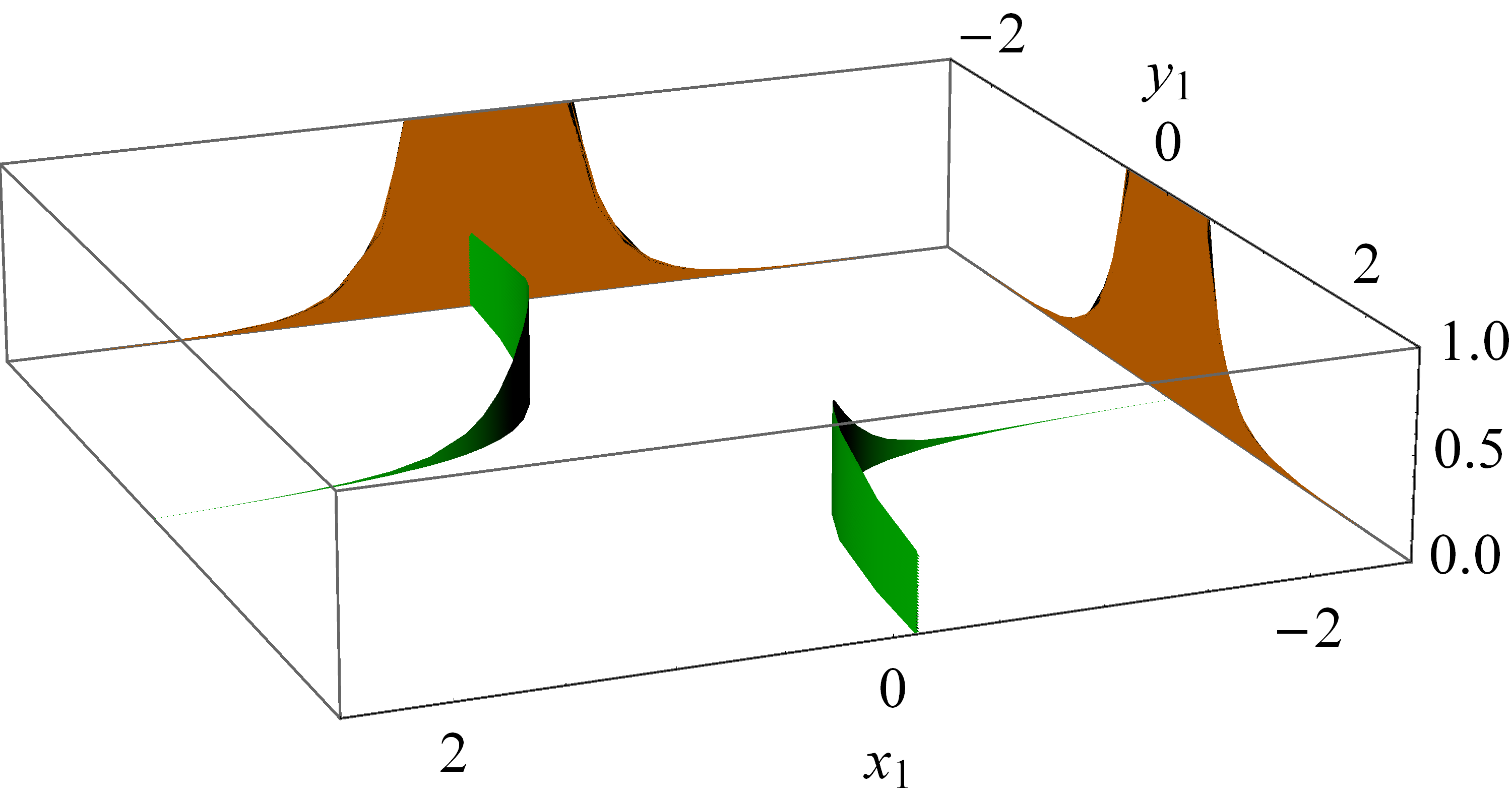}
\caption{Optimal pair density $\rho_2$ of Eq.~\eqref{eq:alternative3} evaluated on $(x_1,0,0,y_1,0,0)$ (green) for one-body ground-state density $\rho$ of helium (brown). The height of the green surface $(1+ \abs{\nabla \vect{T}(\vect{x})}^2)^{-1/2} \rho(\vect{x})$ indicates the prefactor of the Hausdorff measure on the set $\vect{y} = \vect{T}(\vect{x})$.}
\label{fig:rho2marginal}
\end{figure}

The minimization problem \eqref{eq:alternative3}, and indeed any optimal transport problem, has two alternative formulations. To obtain the so-called Monge formulation, one makes the ansatz
\begin{multline}
	\label{eq:ansatz}
	\rho_N(\vect{x}_1,\dots,\vect{x}_N) \\
	= \frac{\rho(\vect{x}_1)}{N} \delta(\vect{x}_2-\vect{T}_2(\vect{x}_1)) \cdots \delta(\vect{x}_N - \vect{T}_N(\vect{x}_1))
\end{multline}
for transport maps or \emph{co-motion functions} $\vect{T}_i: \, \R^3\to\R^3$ which preserves the one-body density $\rho$, that is to say $\int_{T(A)}\rho = \int_A\rho$ for general subsets $A\subset\R^3$
(see Ref.~\onlinecite{SPL1999, SGS2007, CFK1} for physical and mathematical justifications). In fact, the ansatz \eqref{eq:ansatz} is not in general symmetric, so strictly speaking
we should minimize over the symmetrizations of measures of form \eqref{eq:ansatz}, but dropping the symmetrization does not alter the minimum value in \eqref{eq:alternative3}. Or one passes to the so-called Kantorovich dual formulation
\begin{equation}
	\label{eq:Dual}
	V^{\SCE}_{\ee}[\rho] = \sup_{u \, : \R^3\to\R, \, \sum\limits_i u(\vect{x}_i) \le V_{\ee}(\vect{x}_1,\dots,\vect{x}_N)} \int \rho \, u
\end{equation}
(see Ref.~\onlinecite{BPG2012} for a mathematical justification and Ref.~\onlinecite{ML2013} for a numerical scheme). 

The Monge formulation amounts to a spectacular dimension reduction, in that the unknowns are $N$ maps on $\R^3$ instead of one function on $\R^{3 N}$. 
Thus, when discretizing $\R^3$ by $K$ gridpoints one has $K\cdot 3N$ instead of $K^N$ computational degrees of freedom.
However, for $N > 2$ it is not clear if the (symmetrized) Monge formalism captures all minimizers of \eqref{eq:alternative3}. See Section~\ref{sec:MongeCounterexample} for a counterexample when the Coulomb repulsion is replaced by a repulsive harmonic interaction. Moreover, previous numerical (and analytical) computation of $V_{\ee}^{\SCE}$ using \eqref{eq:ansatz} is hitherto restricted to spherically symmetric densities or 1D systems. This is because one has to deal with the infinite-dimensional
nonlinear constraint that the $\vect{T}_i$ preserve $\rho$, and because the $\vect{T}_i$ are expected to jump along surfaces; in the radial case, the surfaces are believed to be concentric spheres\cite{SGS2007}.



The Kantorovich dual formulation, which has been successfully applied to non-spherical problems \cite{ML2013}, cures the high storage complexity of the original formulation
\eqref{eq:alternative3}, but the inequality constraint in \eqref{eq:Dual} that needs to be satisfied by $u$ is still high-dimensional.

It is then of interest to explore alternative ways of reducing the dimensionality of \eqref{eq:limFHK}. The remainder of this paper is devoted to developing such an alternative approach, based on minimization over 2-point densities satisfying representability constraints.

\section{N-density representability and reduced density models}
\label{sec:NdensityRepr}

We now derive a simple but we believe fruitful reformulation of the minimization problem \eqref{eq:limFHK}. 
We begin by formalizing the notion of reduced densities.
Throughout this section it is useful to work with the convention that all densities and reduced densities 
integrate to $1$. We denote 
$k$-point reduced densities with this normalization by $p_k$, to distinguish them from the customary $k$-point 
reduced densities $\rho_k$ which integrate to the number of $k$-tuples in the system. Thus, given a symmetric $N$-point 
probability density $p_N$ on $\R^{3N}$, $N\ge 2$, we define the associated one- and two-point reduced densities 
(known in probability theory under the name marginal densities) by 
\begin{align}
	p_1(\vect{x}_1) &= \int_{\R^{3(N-1)}} p_N(\vect{x}_1,\dots,\vect{x}_N) \, \ud\vect{x}_2 \cdots \ud\vect{x}_N, \label{eq:onebodydens} \\
	p_2(\vect{x}_1,\vect{x}_2) &= \int_{\R^{3(N-2)}} p_N(\vect{x}_1,\dots,\vect{x}_N) \, \ud\vect{x}_3 \cdots \ud\vect{x}_N. \label{eq:twobodydens} 
\end{align}
In particular, $p_1$ is related to the customary one-body density $\rho$ by the formula $p_1=\rho/N$. 

Typical $N$-point densities occurring in the SCE limit concentrate on lower dimensional
subsets (see Figure~\ref{fig:rho2marginal}). Mathematically this does not pose real difficulties (and is a higher-dimensional analogue of the familiar
charge distributions on surfaces in electrostatics). It just means that
these densities should properly be regarded as probability measures on $\R^{3N}$, not functions, i.e., 
they are not specified by pointwise values but by their integrals over sets. In the more general setting of probability measures, \eqref{eq:onebodydens}, \eqref{eq:twobodydens} have to be replaced by 
\begin{align}
	\label{eq:marginals1}
    \int_{A} \ud p_1 &= \int_{A\times\R^{3(N-1)}} \ud p_N, \\
    \label{eq:marginals2}
    \int_{A\times B} \ud p_2 &= \int_{A\times B\times\R^{3(N-2)}} \ud p_N,
\end{align}
for any subsets $A, B\subset \R^3$. 

We employ the usual notation $p_N\mapsto p_1$ and $p_N\mapsto p_2$ for the validity of Eq.~\eqref{eq:onebodydens} respectively \eqref{eq:twobodydens}.
\begin{definition} ($N$-density representability) Let $N\ge 2$. A probability density (or probability measure) $p_2$ on $\R^{6}$ is called \emph{$N$-density-representable} if there exists a symmetric 
probability density (or probability measure) $p_N$ on $\R^{3N}$ such that
$p_N\mapsto p_2$. 
\end{definition} 
\noindent
{\bf Examples} 1) It is clear that $p_2$ is 2-density representable if and only if it is symmetric. \\[1mm]
2) Any statistically independent measure $p_2(\vect{x},\vect{y}) = p_1(\vect{x}) p_1(\vect{y})$ is $N$-density representable for all $N$, since it is represented by the 
$N$-body probability measure $p_1(\vect{x}_1)\cdots p_1(\vect{x}_N)$. \\[1mm]
3) The totally anticorrelated probability measure 
\begin{equation}
	\label{eq:anticorr}
	p_2(\vect{x},\vect{y})=\frac12 \Bigl(\delta(\vect{x}-A) \delta(\vect{y}-B) + \delta(\vect{x}-B) \delta(\vect{y}-A)\Bigr),
\end{equation}
$A, B \in \R^3$, $A \neq B$ is 2-density representable, but not $3$-density representable. That is to say, even though it is a symmetric probability measure on $\R^6$, there does not exist
any symmetric probability measure $p_3$ on $\R^9$ such that $\int p_3(\vect{x},\vect{y},\vect{z}) \ud\vect{z} = p_2(\vect{x},\vect{y})$. 
The reason is explained in Section~\ref{sec:ModelProblem}.
\\[1mm]
4) The previous example can be turned into a smooth one
(see Figure~\ref{fig:smoothanti}). The smooth pair density
\begin{equation}
	\label{eq:smoothanticorr}
	p_2(\vect{x},\vect{y}) = \frac12 \Bigl(\varphi(\vect{x}-A) \varphi(\vect{y}-B) + \varphi(\vect{x}-B) \varphi(\vect{y}-A)\Bigr),
\end{equation}
with $\varphi$ any nonnegative function on $\R^3$ with $\int \varphi = 1$ and $\varphi(z)=0$ when $\abs{z} > \abs{A-B}/2$, is not $3$-density representable, as we will show in Section~\ref{sec:NecessaryConditions}.

\begin{figure}[ht!]
\centering
\includegraphics[height=0.3\textheight]{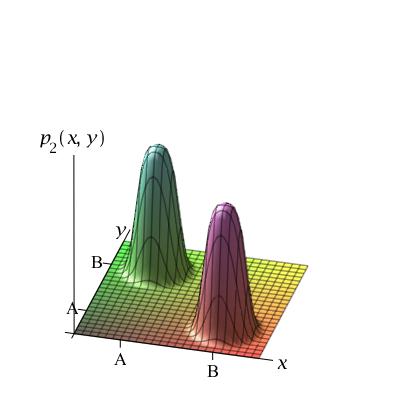}
\caption{Pair densities which are not $3$-density-representable, such as the one depicted here 
(Eq.~\eqref{eq:smoothanticorr}), can be quite innocent looking. For further discussion of this
example see Section~\ref{sec:NecessaryConditions}.}
\label{fig:smoothanti}
\end{figure} 

The above definition immediately implies the following theorem. 
\begin{theorem} \label{T:representability}
Let $N> M \ge 2$. If a probability density (or probability measure) $p_2$ on $\R^6$ is $N$-density-representable, then it is also $M$-density-representable. 
\end{theorem}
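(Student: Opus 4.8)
The plan is to produce, from the given $N$-point representation of $p_2$, an explicit $M$-point representation by marginalization. Since the measures arising in the SCE limit are typically singular, concentrated on lower-dimensional sets (cf.\ Figure~\ref{fig:rho2marginal}), I would phrase the entire argument at the level of probability measures and coordinate projections rather than density functions, so that singular representing measures are automatically covered. First I would unpack the hypothesis: $N$-density-representability of $p_2$ supplies a symmetric probability measure $p_N$ on $\R^{3N}$ with $p_N\mapsto p_2$ in the sense of \eqref{eq:marginals2}. Let $\pi\colon\R^{3N}\to\R^{3M}$ denote the projection onto the first $M$ space coordinates, $\pi(\vect{x}_1,\dots,\vect{x}_N)=(\vect{x}_1,\dots,\vect{x}_M)$, and define the candidate $M$-point measure $p_M$ as the pushforward $p_M=\pi_\# p_N$, that is,
\[
	\int_{A_1\times\cdots\times A_M}\ud p_M = \int_{A_1\times\cdots\times A_M\times\R^{3(N-M)}}\ud p_N
\]
for all $A_1,\dots,A_M\subset\R^3$. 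As the pushforward of a probability measure, $p_M$ is automatically a probability measure on $\R^{3M}$.

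Second, I would verify the two defining properties of a representing measure. For \emph{symmetry}: any permutation $\sigma$ of the first $M$ coordinates extends to the permutation of all $N$ coordinates that fixes the indices $M+1,\dots,N$; since $p_N$ is invariant under every permutation in this larger symmetric group, it is invariant under this extension, and because $\pi$ commutes with it, the pushforward $p_M$ inherits invariance under $\sigma$. Hence $p_M$ is symmetric. For the \emph{marginal constraint} $p_M\mapsto p_2$: integrating $p_M$ over its last $M-2$ coordinates $\vect{x}_3,\dots,\vect{x}_M$ amounts, after substituting the definition of $p_M$, to integrating $p_N$ over the coordinates $\vect{x}_3,\dots,\vect{x}_N$ at once; by the consistency of iterated marginals (Fubini's theorem), this coincides with the single marginalization that defines $p_N\mapsto p_2$, which by hypothesis returns $p_2$. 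Therefore $p_M\mapsto p_2$, so $p_2$ is $M$-density-representable, with representing measure $p_M$.

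I expect no genuine obstacle here: the statement reduces to the two elementary facts that marginals of symmetric measures are symmetric and that marginalization is transitive. The only point requiring mild care is the measure-theoretic bookkeeping for general probability measures rather than for densities, which is handled cleanly by the pushforward formulation above; note also the boundary case $M=2$, where $p_M=p_2$ itself and the claim collapses to the observation (Example 1) that a 2-density-representable measure is exactly a symmetric one.
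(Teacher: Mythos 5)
Your proposal is correct and is essentially the paper's own proof: both marginalize the representing $p_N$ over the last $N-M$ coordinates and observe that the result is a symmetric $M$-point measure still representing $p_2$. Your measure-theoretic pushforward phrasing is just a more careful version of the same one-line argument.
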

In other words, $N$-density representability becomes a more and more stringent condition as $N$ increases.
\begin{proof}
If $p_N$ is a symmetric $N$-body density which represents $p_2$, then 
\begin{multline*}
    \tilde{p}(\vect{x}_1,\dots,\vect{x}_M)\\
    = \int_{\R^{3(N-M)}} p_N(\vect{x}_1,\dots,\vect{x}_M,\dots,\vect{x}_N) \ud \vect{x}_{M+1} \cdots \ud \vect{x}_N
\end{multline*}
is a symmetric $M$-body density which also represents $p_2$.
\end{proof}
With the help of the concept of density representability, we can exploit the fact that the Coulomb potential $V_{\ee}$ in \eqref{eq:alternative3} only involves pair interactions to reformulate the many-body optimal transport definition \eqref{eq:alternative3}
of $V_{\ee}^{\SCE}$ as a standard (two-body) optimal transport problem with a constraint. This result does not depend on the Coulombic form of the interaction potential $v_{\ee}$.
\begin{theorem} (SCE energy via density representability) \label{T:SCE} For any given single-particle density $\rho$ of an $N$-electron system,
\begin{equation}
\label{eq:alternative4}
V_{\ee}^{\SCE}[\rho] = \min_{\substack{p_2\mapsto \rho/N \\ p_2 \text{ $N$-density-rep.}}} \binom{N}{2} \int_{\R^6} v_{\ee} \, p_2,
\end{equation}
where $v_{\ee}$ is any pair potential which is symmetric (i.e., $v(\vect{x},\vect{y}) = v(\vect{y},\vect{x})$), and the minimization is over probability densities $p_2$ on $\R^6$. 
\end{theorem}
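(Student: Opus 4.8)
The plan is to start from the optimal transport formulation \eqref{eq:alternative3} of $V_{\ee}^{\SCE}[\rho]$, written in the probability normalization of this section as a minimization over symmetric $N$-point probability measures $p_N$ with $p_N\mapsto\rho/N$ of $\int_{\R^{3N}}V_{\ee}\,p_N$, and to show that this integral depends on $p_N$ only through its two-point marginal $p_2$. The minimization then splits into an inner minimization over all $p_N$ sharing a prescribed $p_2$, over which the objective is constant, and an outer minimization over the admissible $p_2$, which I would identify with precisely the $N$-density-representable $p_2$ satisfying $p_2\mapsto\rho/N$.

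The first step is the reduction of the cost. Since $V_{\ee}$ is a sum of pair terms \eqref{eq:Vee} and $p_N$ is invariant under permutations of its $N$ arguments, each of the $\binom{N}{2}$ summands $\int v_{\ee}(\vect{x}_i,\vect{x}_j)\,p_N$ equals $\int v_{\ee}(\vect{x}_1,\vect{x}_2)\,p_N$, which by the definition \eqref{eq:twobodydens} of the two-point marginal (or \eqref{eq:marginals2} for measures) equals $\int_{\R^6}v_{\ee}\,p_2$. Hence $\int_{\R^{3N}}V_{\ee}\,p_N=\binom{N}{2}\int_{\R^6}v_{\ee}\,p_2$ for every symmetric $p_N\mapsto p_2$, and the symmetry of $v_{\ee}$ ensures that the resulting two-body cost is a genuine symmetric optimal transport cost.

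Next I would establish the two inequalities. For $V_{\ee}^{\SCE}[\rho]\ge$ the right-hand side of \eqref{eq:alternative4}: given any competitor $p_N\mapsto\rho/N$ in \eqref{eq:alternative3}, its two-point marginal $p_2$ is $N$-density-representable (witnessed by $p_N$ itself) and satisfies $p_2\mapsto\rho/N$ by marginal consistency, so $\binom{N}{2}\int v_{\ee}\,p_2$ is one of the values competing in \eqref{eq:alternative4}; minimizing over $p_N$ gives the claim. For the reverse inequality: given any admissible $p_2$ in \eqref{eq:alternative4}, $N$-density representability furnishes a symmetric $p_N\mapsto p_2$, and since the one-point marginal of $p_N$ coincides with that of $p_2$, namely $\rho/N$, this $p_N$ is admissible in \eqref{eq:alternative3}, whence $\binom{N}{2}\int v_{\ee}\,p_2=\int V_{\ee}\,p_N\ge V_{\ee}^{\SCE}[\rho]$ by Step~1. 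Minimizing over admissible $p_2$ then yields the opposite inequality, and the two together give \eqref{eq:alternative4}.

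The delicate point, and the step I expect to require the most care, is attainment of the minimum together with the fact that the whole argument must be run for probability \emph{measures} rather than densities, since SCE minimizers concentrate on lower-dimensional sets (Figure~\ref{fig:rho2marginal}). I would dispose of this by invoking the known existence of a minimizing $p_N$ for \eqref{eq:alternative3}: its two-point marginal realizes the minimum on the right of \eqref{eq:alternative4}, so the infimum is attained and may be written as a minimum. The marginal-consistency identities \eqref{eq:marginals1}--\eqref{eq:marginals2}, namely that integrating $p_N$ down to one variable agrees with first integrating to $p_2$ and then integrating out one further variable, are exactly what make the passage between the $p_N$- and $p_2$-constraints an equivalence rather than a one-sided inclusion.
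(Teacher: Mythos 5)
Your proposal is correct and follows essentially the same route as the paper: the key identity $\int V_{\ee}\,p_N=\binom{N}{2}\int v_{\ee}\,p_2$ obtained from the permutation symmetry of $p_N$, followed by splitting the minimization over $p_N$ into a (trivial) inner minimization at fixed $p_2$ and an outer minimization over $N$-density-representable $p_2$ with the right one-point marginal, which is exactly the paper's double-minimization argument written out as two inequalities. The added remarks on attainment and on working with measures rather than densities are consistent with, though not spelled out in, the paper's proof.
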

Here the appearance of the normalization constants $N$ and $\binom{N}{2}$ is due to the fact that $\rho$ integrates to $N$, not $1$, whereas $p_2$ integrates
to $1$, not $\binom{N}{2}$. 
\begin{proof}
The proof is similar to the famous proof of Levy of the Hohenberg-Kohn theorem. For any symmetric $N$-point density $\rho_N$ with $\rho_N \mapsto \rho_2$, we clearly have
\begin{equation} \label{eq:Veeviap2}
    \int_{\R^{3N}} V_{\ee} \, \rho_N = \binom{N}{2} \int_{\R^6} v_{\ee} \, p_2;
\end{equation}
that is to say the electron-electron energy only depends on the two-body reduced density of $\rho_N$. 
We can therefore usefully partition the minimization in \eqref{eq:alternative3} into a double minimization, first over $\rho_N$ subject to fixed $p_2$, then over $p_2$:
\begin{eqnarray*}
   \min_{\rho_N\mapsto \rho/N} \int V_{\ee} \rho_N & = & \min_{\substack{p_2\mapsto \rho/N, \\ p_2 \text{ $N$-density-rep.}}} \min_{\rho_N\mapsto p_2} \int V_{\ee}\rho_N \\
    & = & \min_{\substack{p_2\mapsto \rho/N, \\ p_2 \text{ $N$-density-rep.}}} \binom{N}{2} \int v_{\ee} p_2,
\end{eqnarray*}
the last equality being due to Eq.~\eqref{eq:Veeviap2}. 
\end{proof}
The formula for $V_{\ee}^{\SCE}$ in Theorem~\ref{T:SCE} together with the necessary conditions for $N$-density representability in Theorem~\ref{T:representability} immediately suggests a natural hierarchy of approximations. For any given single-particle density $\rho$ of an $N$-electron system, let us define
\begin{multline} \label{eq:rdmodels}
   V_{\ee}^{\SCE,k}[\rho] = \min_{\substack{p_2\mapsto \rho/N, \\ p_2 \text{ $k$-density-rep.}}} \binom{N}{2} \int_{\R^6} v_{\ee} p_2, \\ k = 2, 3, \dots
\end{multline}
That is, we replace the requirement that $p_2$ be $N$-representable by the weaker requirement that it be $k$-representable for some $k\le N$. This enlarges the set of admissible $p_2$'s in the minimization, leading to the following chain of inequalities
\begin{equation}
	\label{eq:rdineq}
	\begin{array}{ccccccc} V_{\ee}^{\SCE,2}[\rho] & \le & \dots & V_{\ee}^{\SCE,k}[\rho] & \le & \dots  & V_{\ee}^{\SCE,N}[\rho] \\
	\parallel & & & & & & \parallel \\
	\min\limits_{p_2\mapsto\rho/N} \int v_{\ee}\, p_2 & & & & & & V_{\ee}^{\SCE}[\rho].
	\end{array}
\end{equation} 
We call $V_{\ee}^{\SCE,k}$ the order-$k$ approximation of the SCE energy.
The lowest-order approximation $V_{\ee}^{\SCE,2}$ corresponds to solving a classical (two-body) optimal transport problem with 
Coulomb cost (yielding the functional introduced in Ref.~\onlinecite{CFK1} which we called $F^{OT}[\rho]$), whereas the order-N approximation $V_{\ee}^{\SCE,N}$ recovers the exact SCE energy. Physically, the intermediate functionals $V_{\ee}^{\SCE,k}$ can be thought of as reduced models for the energy of strongly correlated electrons which take into account $k$-body correlations. 
\\[2mm]
The unknown in the order-$k$ approximation is a $k$-body density, so the computational cost increases 
steeply with $k$; e.g., discretizing each copy of $\R^3$ by $K$ gridpoints leads to a $K^k$-point discretization 
for $\R^{3k}$. The practical value of our reduced models therefore depends strongly on whether low-order approximations are already capable of capturing the main part of the full SCE energy. A tentative answer is
that they are, as we will document in the next two sections.
\\[2mm]
Theoretically the parameter $k$ in \eqref{eq:rdmodels} can also be chosen bigger than $N$, a particularly interesting question being what happens when
$k\to\infty$. In the present paper, we will only answer this question for the model densities \eqref{eq:twodiracs} below. A general discussion will appear elsewhere.
\\[2mm]
Finally, we remark that density representability of a pair density is obviously a necessary condition for the familiar (wavefunction) representability of any two-body density matrix
which gives rise to this pair density. Analyzing the relationship between this necessary condition and common representability conditions from density matrix theory such as the $P$, $Q$ and $G$ conditions\cite{Coleman2000,Mazziotti2007} lies beyond the scope of this paper.

\section{Model problem: $N$ particles occupying 2 sites}
\label{sec:ModelProblem}

In this section we analyze a model system in which the particle positions are restricted to 2 sites, to gain basic insights into what it means for a pair density 
to be $k$-density representable and into how the resulting functionals \eqref{eq:rdmodels} depend on $k$. The single-particle density of such a system has the form
\begin{equation} \label{eq:twodiracs}
	\frac{\rho}{N} = (1-t)\,\delta_{A} + t\,\delta_{B}, \quad 0 \le t \le 1,
\end{equation}
where $N$ is the number of particles and $A$ and $B$ are two different points in $\R^3$. This model density, while of course very simplistic, can be regarded as a toy model for the electron density of a diatomic system in the regime when the interatomic distance is much larger than the atomic radii. If we don't want to allow fractional occupation numbers of the sites, $t$ would be restricted to integer multiples of $1/N$, but since this makes little difference to the analysis, we might as well allow real occupation numbers.
\\[2mm]
Our first goal is to compute explicitly the set of $N$-representable $2$-point probability measures for 
our 2-site system. The $N$-point probability measures on $\R^{3N}$ whose single-particle density has
the form \eqref{eq:twodiracs} for some $t$ are the measures of form
\begin{equation}
	\label{eq:Npointmeasures}
	p_N = \sum_{I=(i_1,\dots,i_N)\in\{A,B\}^N} \alpha_I \, \delta_{i_1}\otimes \cdots \otimes\delta_{i_N}
\end{equation}
with $\alpha_I\ge 0$, $\sum_I\alpha_I = 1$, and correspond to the probability measures on the discrete $2$-site, $N$-particle state space 
$\{A,B\}^N$. We use the following notation for the different sets of probability measures of interest:
$\calP(\{A,B\}^N)$ denotes the set of probability measures on $\{A,B\}^N$, i.e., all measures of form 
\eqref{eq:Npointmeasures}; 
$\calP^{\text{sym}}(\{A,B\}^N)$ is the set of such measures which are \emph{symmetric}, i.e., $\alpha_{(i_1,\dots,i_N)}$ is a symmetric function of its arguments $(i_1,\dots,i_N)$; 
and $\calP^{N\text{-rep}}(\{A,B\}^2)$ stands for the set of $N$-density-representable probability measures 
on the two-body state space $\{A,B\}^2$, i.e., those probability measures on $\{A,B\}$ which arise as
marginals \eqref{eq:twobodydens} of some $p_N\in\calP^{\text{sym}}(\{A,B\}^N)$. In particular, the 
$2$-density-representable probability measures are those measures of form
\begin{multline}
	\label{eq:2rep1}
    p_2 = \alpha_{AA} \delta_A\otimes\delta_A + \alpha_{BB} \delta_B\otimes\delta_B\\
    + \alpha_{AB}\delta_{A}\otimes\delta_B + \alpha_{BA}\delta_{B}
    \otimes \delta_A 
\end{multline}
which satisfy the trivial conditions of nonnegativity, normalization, and symmetry,
\begin{equation}
	\label{eq:2rep2}
    \alpha_{ij}\ge 0 \text{ for all } i,j, \quad \sum_{i,j} \alpha_{ij} = 1, \quad \alpha_{AB}=\alpha_{BA}.
\end{equation}
It is clear from the explicit representation \eqref{eq:Npointmeasures} that $\calP(\{A,B\}^N)$ is the convex hull of its 
extreme points $\delta_{i_1}\otimes \cdots \otimes \delta_{i_N}$, $i_1,\dots,i_N\in\{A,B\}$. The set of symmetric 
$N$-point probability densities satisfies $\calP^{\text{sym}}(\{A,B\}^N) = S_N \calP(\{A,B\}^N)$, where $S_N$ is the symmetrizer
\begin{equation*}
(S_N\,p_N)(A_1\times\cdots\times A_N) = \frac{1}{N!} \sum_{\sigma} p_N(A_{\sigma(1)}\times \cdots \times A_{\sigma(N)})
\end{equation*}
and the sum runs over all permutations. It follows that $\calP^{\text{sym}}(\{A,B\}^N)$ is the convex hull of the elements
$S_N\delta_{i_1}\otimes \cdots \otimes \delta_{i_N}$, and that $\calP^{N\text{-rep}}(\{A,B\}^2)$ is the convex hull of 
their two-point densities,
\begin{multline}
	\label{eq:ExtDens}
	\calP^{N\text{-rep}}(\{A,B\}) = \text{convex hull of the measures} \\
	\Bigl\{p_2^{(S_N \delta_{i_1} \otimes \cdots \otimes \delta_{i_N})}, \, i_1,\dots,i_N\in\{A,B\}\Bigr\},
\end{multline}
where here and below, $p_2^{(p_N)}$ denotes the two-particle density of $p_N$.
To compute these two-point densities, we use an averaging formula which can be shown by an elementary computation:
first symmetrizing and then taking the two-point density is the same as taking the average over all two-point
densities,   
\begin{equation}
	\label{eq:MarOfSym}
	p_2^{(S_N\,p_N)} = \frac{1}{\binom{N}{2}} \sum_{1\le i<j\le N} \int p_N \, \ud\widehat{\vect{x}}_{ij},
\end{equation}
where $\widehat{\vect{x}}_{ij}$ denotes the list of coordinates $\vect{x}_1,\dots,\vect{x}_N$ with $\vect{x}_i$ and $\vect{x}_j$ omitted, and $p_N$
is any $N$-point probability measure. Now consider
a measure of form $p_N=\delta_{i_1}\otimes\cdots\otimes\delta_{i_N}$, and let $K = \sharp\{i_j \, | \, i_j=B\}$, i.e.~the occupation number of site $B$. Note $0\le K\le N$. By the averaging formula \eqref{eq:MarOfSym}, and using the abbreviated notation $\delta_i^K = \otimes_{i=1}^K \delta_i$, 
\begin{multline}
\label{eq:MarFinal}
p_2^{(S_N\delta_{i_1}\otimes\cdots\otimes\delta_{i_N})} 
= p_2^{(S_N\delta_A^{N-K} \otimes \delta_B^{K})} \\ 
= \frac{1}{\binom{N}{2}} \Bigl[ \binom{N-K}{2} \delta_A\otimes\delta_A + \binom{K}{2} \delta_B\otimes\delta_B \\
+ \frac{K(N-K)}{2}\bigl(\delta_A\otimes\delta_B + \delta_B\otimes\delta_A\bigr)\Bigr].
\end{multline}
Note that the resulting two-point marginal does not depend on the $i_j$, but only on the occupation number
$K\in\{0,\dots,N\}$. Equations \eqref{eq:ExtDens}, \eqref{eq:MarFinal} give the following final result:
\begin{theorem} \label{T:Nrep} The set of $N$-representable
2-point measures, $\calP^{N\text{-rep}}(\{A,B\})$, is the convex hull of the $K\!+\! 1$ measures given by the right hand side
of \eqref{eq:MarFinal}, where $K$ runs from $0$ to $N$.
\end{theorem}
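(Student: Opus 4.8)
The plan is to read the theorem as the clean synthesis of the two structural facts already assembled above: the description \eqref{eq:ExtDens} of $\calP^{N\text{-rep}}(\{A,B\})$ as a convex hull of the two-point marginals of the extreme points of $\calP^{\text{sym}}(\{A,B\}^N)$, together with the explicit marginal formula \eqref{eq:MarFinal}. First I would make precise why \eqref{eq:ExtDens} holds. The two-point marginal map $p_N \mapsto p_2^{(p_N)}$ given by \eqref{eq:twobodydens}/\eqref{eq:marginals2} is \emph{affine}: it is linear in $p_N$ and carries probability measures to probability measures, hence it maps the convex hull of any set of measures \emph{onto} the convex hull of their images. Since $\calP^{\text{sym}}(\{A,B\}^N)$ is the convex hull of the finitely many symmetrized products $S_N\,\delta_{i_1}\otimes\cdots\otimes\delta_{i_N}$, its image under the marginal map — which is by definition $\calP^{N\text{-rep}}(\{A,B\})$ — is exactly the convex hull of the measures $p_2^{(S_N\delta_{i_1}\otimes\cdots\otimes\delta_{i_N})}$, i.e.\ \eqref{eq:ExtDens}.

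Next I would invoke \eqref{eq:MarFinal} to evaluate each of these generators, writing $q_K$ for the measure appearing on its right-hand side. The decisive observation is that $q_K$ depends on the multi-index $(i_1,\dots,i_N)$ \emph{only} through the occupation number $K=\sharp\{j : i_j = B\}$. Thus, although \eqref{eq:ExtDens} a priori lists $2^N$ generators, the set of \emph{distinct} generators is precisely $\{q_K : K=0,\dots,N\}$, the $N+1$ measures on the right-hand side of \eqref{eq:MarFinal}. Since a convex hull is unchanged by deleting repeated generators, $\calP^{N\text{-rep}}(\{A,B\})$ equals the convex hull of $q_0,\dots,q_N$, which is the assertion. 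Note this needs no check of extremality: even if some $q_K$ happened to lie in the hull of the others, the stated convex-hull identity would be unaffected.

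The genuine content — the one step I would treat with care rather than as bookkeeping — is the combinatorial marginal computation \eqref{eq:MarFinal}. For a pure product $\delta_A^{N-K}\otimes\delta_B^{K}$ one must count, among the $\binom{N}{2}$ unordered index pairs, how many are of type $AA$, $BB$, and mixed, the answers $\binom{N-K}{2}$, $\binom{K}{2}$, $K(N-K)$ following from the averaging identity \eqref{eq:MarOfSym}, with the mixed pairs splitting symmetrically between $\delta_A\otimes\delta_B$ and $\delta_B\otimes\delta_A$ after symmetrization. A clean way to see this cleanly and sidestep any labeling ambiguity is to note that $q_K$ is simply the joint law of the first two draws when sampling without replacement from a pool of $N-K$ copies of $A$ and $K$ copies of $B$, which immediately yields the weights in \eqref{eq:MarFinal}. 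Everything else is the elementary principle that affine maps carry convex hulls to convex hulls and that duplicate generators may be discarded, so I expect the proof itself to be essentially a one-line combination of \eqref{eq:ExtDens} and \eqref{eq:MarFinal} with no real obstacle.
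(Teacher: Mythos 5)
Your proof is correct and follows essentially the same route as the paper: the paper likewise obtains the result by combining the convex-hull description \eqref{eq:ExtDens} (marginalization being linear on the finite simplex $\calP^{\text{sym}}(\{A,B\}^N)$) with the averaging formula \eqref{eq:MarOfSym} and the observation that the marginal \eqref{eq:MarFinal} depends only on the occupation number $K$. Your sampling-without-replacement reading of \eqref{eq:MarFinal} is a pleasant extra check but does not change the argument.
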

This set is plotted in Figure~\ref{fig:Nrep}, for different values of $N$. 
\begin{figure}
\centering
\includegraphics[width=\columnwidth]{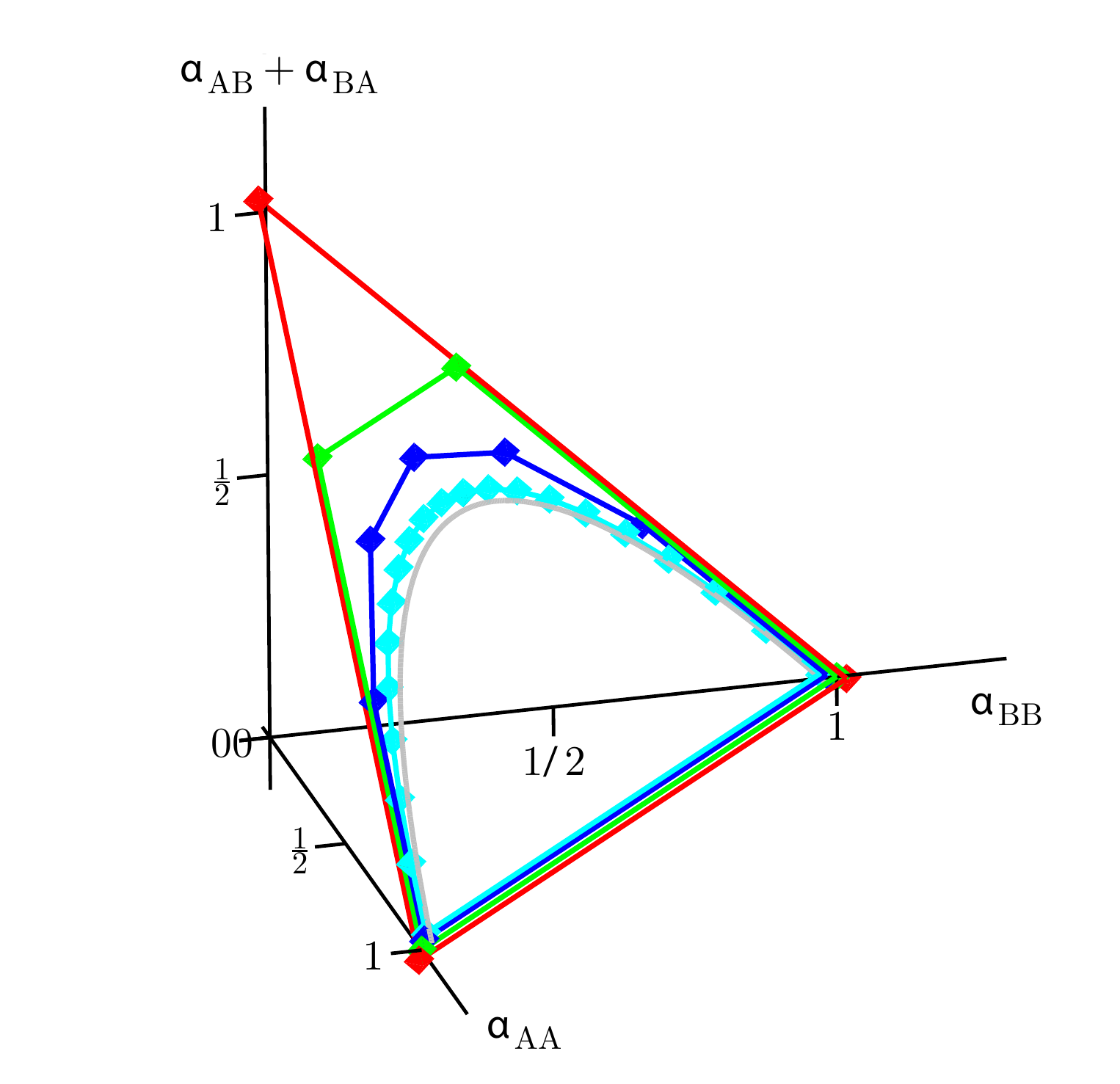}
\caption{The set of $N$-representable pair densities of form $\alpha_{AA}\delta_A\otimes\delta_A + 
\alpha_{AB}\delta_A\otimes\delta_B + \alpha_{BA}\delta_B\otimes\delta_A + \alpha_{BB}\delta_B\otimes\delta_B$
for $N=2$ (red), $N=3$ (green), $N=6$ (blue), and $N=20$ (light blue). 
The coefficient $\alpha_{AB}+\alpha_{BA}$ on the vertical axis indicates the weight
of the anticorrelated contribution \eqref{eq:anticorr}; its maximum representable value decreases with $N$.   
When $N=2$, only the trivial conditions \eqref{eq:2rep2} are present. Remarkably, as $N$ gets large 
the upper boundary of the representable set approaches the curve given by the mean field densities, i.e.,
$p_2=p_1\otimes p_1$ for some $p_1$ (grey curve), see Eq.~\eqref{eq:MarViaMeanfield},\eqref{eq:MeanField}.}
\label{fig:Nrep}
\end{figure}

Next we show that, as suggested by Figure~\ref{fig:Nrep}, when $N$ gets large the $N\! +\! 1$ extremal pair densities in 
\eqref{eq:MarFinal} approach a certain very interesting continuous curve. To see this, let us re-write formula \eqref{eq:MarFinal} in 
terms of the normalized occupation number $t=K/N\in[0,1]$ instead of $K$, and separate the coefficients into
$N$-independent and lower order terms. An elementary calculation shows that
\begin{equation*}
	\frac{K-1}{N-1} = t - \frac{1-t}{N-1}, \quad \frac{(N-K)-1}{N-1} = (1-t) - \frac{t}{N-1},
\end{equation*}
and, abbreviating $\delta_i \otimes \delta_j$ by $\delta_{i j}$,
\begin{multline}
	\label{eq:MarViaMeanfield}
	p_2^{(S_N\delta_A^{N-K} \otimes \delta_B^{K})} \\
	= \underbrace{(1-t)^2 \delta_{A A} + t^2 \delta_{B B} + t(1-t)(\delta_{A B} + \delta_{B A})}_{= p_2^{(\infty)}} \\
	+ \frac{t(1-t)}{N-1} \left(- \delta_{A A} - \delta_{B B} + \delta_{A B} + \delta_{B A}\right). 
\end{multline}
But the first term is precisely the mean field approximation to the pair density of the state 
$S_N\delta_A^{N-K}\delta_B^K$ obtained from its single-particle density 
\begin{equation}
	p_1^{(S_N\delta_A^{N-K}\delta_B^K)} = (1-t)\delta_A + t\delta_B,
\end{equation}
namely 
\begin{equation}
	\label{eq:MeanField}
	p_2^{(\infty)} = \bigl((1-t)\delta_A + t\delta_B\bigr) \otimes \bigl((1-t)\delta_A + t\delta_B\bigr).
\end{equation}
The second term in \eqref{eq:MarViaMeanfield} is a correlation correction which depletes the ``ionic'' terms $\delta_A\otimes\delta_A$ and 
$\delta_B\otimes\delta_B$ in favour of the ``anticorrelated'' 
terms $\delta_A\otimes\delta_B$ and $\delta_B\otimes\delta_A$. This correction is large for small $N$ (and 
even completely removes the ionic terms when $N=2$ and $t=1/2$), but vanishes in the limit 
$N\to\infty$ at fixed occupation number $t$. 

In particular, we have established the following
\begin{theorem} \label{T:MeanField}
A pair density of form \eqref{eq:2rep1} is $N$-representable for all $N$ if and only if it lies in the convex
hull of the mean field densities, or -- by inspection of Figure~\ref{fig:Nrep} -- 
if and only if it is a convex combination of \emph{two} mean field densities.  
\end{theorem}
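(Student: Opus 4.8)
The plan is to prove the two set inclusions underlying the stated equivalence. Write $\mathcal{C}$ for the convex hull of the mean field densities $p_2^{(\infty)}(t)=\bigl((1-t)\delta_A+t\delta_B\bigr)^{\otimes 2}$, $t\in[0,1]$, and $\mathcal{R}_\infty=\bigcap_{N\ge 2}\calP^{N\text{-rep}}(\{A,B\})$ for the set of pair densities representable for every $N$. I will show $\mathcal{C}\subseteq\mathcal{R}_\infty$ (the ``if'' direction) and $\mathcal{R}_\infty\subseteq\mathcal{C}$ (the ``only if'' direction). The final clause of the theorem, that membership in $\mathcal{C}$ is the same as being a convex combination of just \emph{two} mean field densities, I will treat at the end as an elementary two-dimensional convexity fact.

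For $\mathcal{C}\subseteq\mathcal{R}_\infty$ the argument is short. Each mean field density is a product measure $p_1\otimes p_1$, hence is represented by the symmetric $N$-point probability measure $p_1^{\otimes N}$ for every $N$ (this is the statistically independent case, item~2 of the examples above), so it lies in $\calP^{N\text{-rep}}(\{A,B\})$ for all $N$. Since $\calP^{N\text{-rep}}(\{A,B\})$ is the image of the convex set $\calP^{\text{sym}}(\{A,B\}^N)$ under the linear two-point marginal map, it is convex, and therefore contains the whole hull $\mathcal{C}$. As this holds for every $N$, we obtain $\mathcal{C}\subseteq\mathcal{R}_\infty$.

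The substantive direction is $\mathcal{R}_\infty\subseteq\mathcal{C}$, and the key is to exploit the explicit decomposition \eqref{eq:MarViaMeanfield} of the extreme points. By Theorem~\ref{T:Nrep}, any $p_2\in\calP^{N\text{-rep}}(\{A,B\})$ is a convex combination $p_2=\sum_{K=0}^N\lambda_K\,p_2^{(S_N\delta_A^{N-K}\otimes\delta_B^{K})}$ with $\lambda_K\ge 0$ and $\sum_K\lambda_K=1$. Substituting \eqref{eq:MarViaMeanfield} and collecting terms gives $p_2=q_N+\tfrac{c_N}{N-1}\,w$, where $q_N=\sum_K\lambda_K\,p_2^{(\infty)}(K/N)\in\mathcal{C}$ is a convex combination of mean field densities, $w=-\delta_{AA}-\delta_{BB}+\delta_{AB}+\delta_{BA}$ is one fixed signed measure, and $c_N=\sum_K\lambda_K\,\tfrac{K}{N}\bigl(1-\tfrac{K}{N}\bigr)\in[0,\tfrac14]$. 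The whole point of writing it this way is that the correlation correction is controlled uniformly: since $0\le c_N\le\tfrac14$, we have $\lVert p_2-q_N\rVert=\tfrac{c_N}{N-1}\lVert w\rVert\to 0$ as $N\to\infty$. If $p_2\in\mathcal{R}_\infty$, this holds simultaneously for every $N$, producing a sequence $q_N\in\mathcal{C}$ with $q_N\to p_2$. As $\mathcal{C}$ is the convex hull of a compact arc in a finite-dimensional space of measures, it is itself compact, hence closed, so the limit $p_2$ lies in $\mathcal{C}$.

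The step I expect to be the crux is precisely this last convergence argument: the decomposition \eqref{eq:MarViaMeanfield} is only useful because the correction is the \emph{same} fixed direction $w$ for every extreme point, with a coefficient that is bounded independently of $K$ and decays like $1/N$, which is exactly what lets the mean field parts $q_N$ converge while staying inside the fixed compact set $\mathcal{C}$. Finally, for the ``two mean field densities'' reformulation, I would pass to the affine coordinates $u=\alpha_{BB}-\alpha_{AA}$ and $v=\alpha_{AB}+\alpha_{BA}$ of Figure~\ref{fig:Nrep}, in which the mean field curve becomes the concave parabola $v=\tfrac12(1-u^2)$, $u\in[-1,1]$, so that $\mathcal{C}=\{(u,v):-1\le u\le 1,\ 0\le v\le\tfrac12(1-u^2)\}$; any such point lies on the horizontal chord joining the two parabola points $\bigl(\pm\sqrt{1-2v},\,v\bigr)$, exhibiting it as a convex combination of exactly two mean field densities.
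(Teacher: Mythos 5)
Your proof is correct and follows essentially the same route as the paper: both directions hinge on the decomposition \eqref{eq:MarViaMeanfield} of the extreme points into a mean field density plus a correction along the fixed direction $-\delta_{AA}-\delta_{BB}+\delta_{AB}+\delta_{BA}$ whose coefficient is $O(1/N)$ uniformly in $K$. The paper asserts the conclusion directly from this decomposition (and ``inspection of Figure~\ref{fig:Nrep}''), whereas you additionally spell out the supporting details --- convexity of $\calP^{N\text{-rep}}(\{A,B\})$, compactness/closedness of the hull of the mean field arc, and the explicit two-point chord argument in the $(u,v)$ coordinates --- all of which check out.
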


The ``primal'' description of $N$-representable pair densities 
as the convex hull of explicit extreme points can be easily turned into an equivalent ``dual'' description
via inequalities. We only give the result in the cases $N=3$ and $N=\infty$.
\begin{corollary}
A pair density of form 
\eqref{eq:2rep1} is 3-representable if and only if it satisfies \eqref{eq:2rep2} and the linear inequality
\begin{equation}
	\label{3rep}
	\alpha_{AB}+\alpha_{BA} \le 2(\alpha_{AA}+\alpha_{BB}),
\end{equation}
and $N$-representable for all $N$ if and only if it satisfies \eqref{eq:2rep2} and the nonlinear inequality
\begin{equation}
	\label{eq:inftyrep}
	\alpha_{AB} + \alpha_{BA} \le 2 \sqrt{\alpha_{AA}\cdot \alpha_{BB}}. 
\end{equation}
\end{corollary}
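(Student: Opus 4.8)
The plan is to convert the two ``primal'' descriptions of the representable sets as convex hulls of explicit extreme points (Theorem~\ref{T:Nrep} for finite $N$, Theorem~\ref{T:MeanField} for $N=\infty$) into equivalent ``dual'' descriptions by half-spaces. Throughout I would abbreviate $a=\alpha_{AA}$, $b=\alpha_{BB}$; by the symmetry and normalization conditions \eqref{eq:2rep2} one has $\alpha_{AB}=\alpha_{BA}=\tfrac12(1-a-b)$, so a pair density of form \eqref{eq:2rep1} is determined by $(a,b)$, the trivial conditions become $a,b\ge0$ and $a+b\le1$, and the ``anticorrelation weight'' is $w:=\alpha_{AB}+\alpha_{BA}=1-a-b$. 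In these coordinates both representable sets become planar convex regions, and the task reduces to identifying their bounding edges.

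For $N=3$ I would simply specialize the extreme points \eqref{eq:MarFinal} to $N=3$, for which $\binom{N}{2}=3$ and $K$ runs over $0,1,2,3$. A direct substitution gives the four points $(a,b)=(1,0),(\tfrac13,0),(0,\tfrac13),(0,1)$, namely $\delta_A\otimes\delta_A$, two mixed densities with $w=\tfrac23$, and $\delta_B\otimes\delta_B$. By Theorem~\ref{T:Nrep} the $3$-representable set is their convex hull, which is the quadrilateral with these vertices. Reading off its edges, three of them lie on the trivial boundaries $a=0$, $b=0$, $a+b\le1$, while the only nontrivial edge is the segment joining $(\tfrac13,0)$ and $(0,\tfrac13)$, which lies on the line $a+b=\tfrac13$; the feasible half-plane is therefore $a+b\ge\tfrac13$, i.e.\ $w\le\tfrac23$, which in view of $a+b=1-w$ is precisely the inequality $w\le2(a+b)$ of \eqref{3rep}. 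Since the corner $(a,b)=(0,0)$ of the trivial triangle---the totally anticorrelated measure \eqref{eq:anticorr}---is the vertex removed by this cut, the bookkeeping is consistent with the third example above.

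For $N=\infty$ I would invoke Theorem~\ref{T:MeanField}: the set in question is the convex hull of the mean field curve $C=\{p_2^{(\infty)}(s):s\in[0,1]\}$ of \eqref{eq:MeanField}, which in coordinates is $C=\{((1-s)^2,s^2):s\in[0,1]\}$. The first step is algebraic: on $C$ one has $\sqrt a=1-s$ and $\sqrt b=s$, so $\sqrt a+\sqrt b=1$ and hence $w=2s(1-s)=2\sqrt{ab}$, i.e.\ \eqref{eq:inftyrep} holds with equality along $C$. Since $w\le2\sqrt{ab}$ is equivalent to $\sqrt a+\sqrt b\ge1$, and $(a,b)\mapsto\sqrt a+\sqrt b$ is concave, the region $R$ cut out by \eqref{eq:2rep2} and \eqref{eq:inftyrep} is a superlevel set of a concave function intersected with a triangle, hence convex; as it contains $C$, this already gives the inclusion $\operatorname{conv}(C)\subseteq R$.

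The reverse inclusion $R\subseteq\operatorname{conv}(C)$ is the step I expect to be the main obstacle, since here the boundary is the genuinely nonlinear arc $C$ rather than finitely many line segments. I would argue it via extreme points: $R$ is compact and convex, so by Minkowski's theorem it is the convex hull of its extreme points, and it suffices to show every extreme point lies on $C$. The boundary of $R$ consists of the strictly convex arc $C$ (on which every point is extreme) together with the chord $a+b=1$ joining $(1,0)$ and $(0,1)$; interior points of this chord are midpoints of neighbouring chord points and so are not extreme, while its endpoints already lie on $C$ (at $s=0,1$). Hence $\operatorname{ext}(R)\subseteq C$ and $R=\operatorname{conv}(\operatorname{ext}(R))\subseteq\operatorname{conv}(C)$, completing the equality $\operatorname{conv}(C)=R$ and with it the characterization \eqref{eq:inftyrep}. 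The only points requiring care are the compactness and strict-convexity claims underlying the extreme-point argument; everything else is elementary substitution.
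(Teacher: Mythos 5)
Your proof is correct and follows the same strategy the paper indicates, namely dualizing the primal convex-hull descriptions of Theorems~\ref{T:Nrep} and \ref{T:MeanField} into half-space (respectively, superlevel-set) inequalities; your vertex computation for $N=3$ and the concavity/extreme-point argument for $N=\infty$ both check out. The only cosmetic difference is in the necessity of \eqref{eq:inftyrep}: the paper first establishes $\alpha_{AB}+\alpha_{BA}\le 2(\alpha_{AA}+\alpha_{AB})(\alpha_{BB}+\alpha_{BA})$ and solves the resulting quadratic in $\alpha_{AB}+\alpha_{BA}$, whereas you use the equivalent reformulation $\sqrt{\alpha_{AA}}+\sqrt{\alpha_{BB}}\ge 1$ together with concavity --- and your write-up is in fact more complete, since it also spells out the sufficiency direction that the paper leaves implicit.
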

To derive \eqref{eq:inftyrep}, one first shows that $\alpha_{AB}+\alpha_{BA} \le 2(\alpha_{AA}+\alpha_{AB})(\alpha_{BB}+\alpha_{BA})$. Thanks to Eq.~\eqref{eq:2rep2} this is a quadratic inequality for
$\alpha_{AB}+\alpha_{BA}$ and solving it yields \eqref{eq:inftyrep}.

The physical meaning of Eq.~\eqref{3rep} is that at most $2/3$ of the mass of $p_2$ can sit 
on the non-ionic configurations $(A,B)$ and $(B,A)$. The meaning of Eq.~\eqref{eq:inftyrep}
is that the total size of the non-ionic contributions cannot exceed its size
in the mean field pair density formed from its single-particle density. 
\\[2mm]
The above results can be used to determine the hierarchy of approximate $N$-particle functionals $V_{\ee}^{\SCE,k}$ introduced in \eqref{eq:rdmodels} on densities
of the form \eqref{eq:twodiracs}.
In fact, the exact Coulomb potential $v_{\ee}(\vect{x},\vect{y}) = \abs{\vect{x}-\vect{y}}^{-1}$ no longer makes sense in the context of these model densities since multiply occupied sites would lead
to infinite energy, so we replace it by an appropriately regularized interaction, with the property that
\begin{multline}
	\label{eq:RegInt}
	{v}_{\ee}(A,A)={v}_{\ee}(B,B) = U_{\text{diag}}\\
	> U_{AB} = {v}_{\ee}(A,B)={v}_{\ee}(B,A).
\end{multline}
Here $U_{\text{diag}}$ and $U_{AB}$ are effective parameters for same-site and different-site repulsion, and the
inequality $U_{\text{diag}}>U_{AB}$ preserves the repulsive effect that the interaction potential decreases 
with interparticle distance. Hence the two-point densities $p_2$ which compete in the 
variational definition \eqref{eq:rdmodels} prefer the different-site configurations $(\vect{x},\vect{y})=(A,B)$ and $(\vect{x},\vect{y})=(B,A)$
over the ionic configurations $(A,A)$ and $(B,B)$. 
Consequently the optimizing $p_2$'s with one-point density \eqref{eq:twodiracs} are those $k$-representable 2-point densities of form \eqref{eq:2rep1}
which have one-point density $\rho_t$ (this fixes their position in direction of the baseline in Figure~\ref{fig:Nrep}, because $t=\frac12(\alpha_{BB}-\alpha_{AA})$) which 
maximize the coefficient $\alpha_{AB}+\alpha_{BA}$, i.e., lie on the upper boundary of the representable set in Figure~\ref{fig:Nrep}. 
When $t$ is an integer multiple of $1/k$, i.e., $t=K/k$, $K=0,1,\dots,k$, the optimizing $p_2$ is thus precisely given by formula \eqref{eq:MarViaMeanfield} with $N$ replaced by $k$.
It follows that, denoting the right hand side of \eqref{eq:twodiracs} by $\rho_t$,
\begin{multline}
	\label{eq:rdenergies}
	{V}_{\ee}^{\SCE,k}[\rho_t] \\
	= \binom{N}{2} \Bigl( U_{\text{diag}} \cdot [t^2 + (1-t)^2] + U_{AB} \cdot 2t(1-t)\Bigr) \\
	- \binom{N}{2} (U_{\text{diag}}-U_{A B}) \frac{2 t(1-t)}{k-1},\\
	t = K/k, \quad K = 0, 1,\dots, k.
\end{multline}
For intermediate occupation numbers $t$ with $t_- =(K-1)/k < t < K/k =t_+$, $K=1,2,\dots,k$, the upper boundary
of the $k$-representable set is given by the linear interpolation between the $p_2$'s coming from $t_-$ 
and $t_+$, and hence
so is the resulting value of $\tilde{V}_{\ee}^{\SCE,k}$. The weights of the contributions from $t_\pm$ are the same as the interpolation weights for the single-particle density,
$\rho_t = (K-k t)\rho_{t_-} + (k t - (K-1))\rho_{t_+}$, and so 
\begin{multline}
	\label{eq:rdenergies2}
	{V}_{\ee}^{\SCE,k}[\rho_t] = (K-kt) {V}_{\ee}^{\SCE,k}[\rho_{(K-1)/k}] \\
	+ \bigr(kt-(K-1)\bigl) {V}_{\ee}^{\SCE,k}[\rho_{K/k}], \\
	\frac{K-1}{k} \le t \le \frac{K}{k}, \quad K = 1,\dots,k.
\end{multline}
The reduced SCE energies \eqref{eq:rdenergies}, \eqref{eq:rdenergies2} are plotted in Figure~\ref{fig:en}, for different values of $k$. 

\begin{figure}
\centering
\includegraphics[width=0.9\columnwidth]{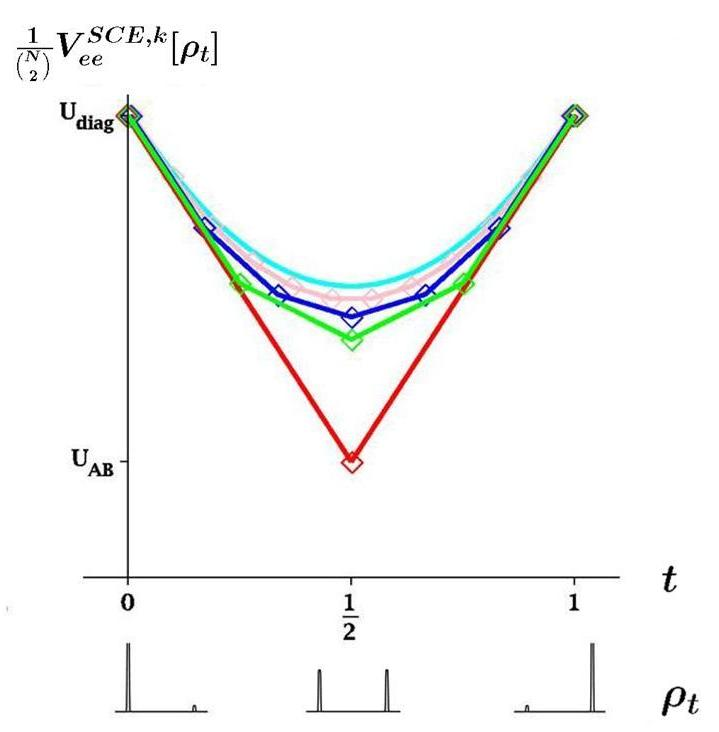}
\caption{Density-representability approximation of order $k$ to $V_{\ee}^{\SCE}$, on densities of form 
$\rho_t/N = (1-t)\delta_A + t\delta_B$ (various values of $k$). The Coulomb interaction has been replaced by the regularized interaction \eqref{eq:RegInt}. Red, green, blue, and pink corresponds to $k = 2, 4, 6, 11$. The piecewise linear structure is an exact feature of the results \eqref{eq:rdenergies}, \eqref{eq:rdenergies2}. The order-$k$
approximation equals the exact $V_{\ee}^{\SCE}$ for $k=N$ particles. Light blue curve: mean field energy (see text).}
\label{fig:en}
\end{figure}


Finally let us calculate and physically interpret the large-$k$ limit. It is clear from our explicit results that the limit is just given by the first part of the right hand side
of Eq.~\eqref{eq:rdenergies}. This part is nothing but a (Hartree-type) mean field energy,
\begin{equation}
\begin{split}
\label{hartree}
\lim_{k\to\infty}{V}_{\ee}^{\SCE,k}[\rho_t]
&= \binom{N}{2} \int {v}_{\ee} \frac{\rho_t}{N}\otimes\frac{\rho_t}{N}
= \left(1-\tfrac{1}{N}\right) J[\rho_t], \\
J[\rho] &= \frac12 \int {v}_{\ee}(\vect{x},\vect{y}) \rho(\vect{x}) \rho(\vect{y}) \ud \vect{x} \ud \vect{y}.
\end{split}
\end{equation}
Here the prefactor $1 - 1/N$ is a self-interaction correction, i.e., the approximation via density representability
of infinite order 
remembers that there are only $\binom{N}{2}$ interaction terms, not
$N^2/2$. In other words, remarkably, \emph{the infinite-order approximation to the SCE functional
is nothing but the self-interaction-corrected mean field energy}, even though mean field approximations 
played no role in the construction of the reduced SCE functionals \eqref{eq:rdmodels}.

After completing the above (elementary) analysis, we learned that results similar to, and in fact much more general than, Theorems~\ref{T:Nrep} and \ref{T:MeanField} 
are well known in probability theory, more precisely in the theory of ``exchangeable sequences'' of random variables\cite{Ald85, DF80}. This theory, which appears to be hitherto
disconnected from DFT (as well as wavefunction representability), entails a classification going back to de Finetti\cite{deFin69} of symmetric probability densities 
$p_{\infty}(\vect{x}_1,\vect{x}_2,\dots)$ in infinitely many variables.

\section{Necessary conditions for density-representability}
\label{sec:NecessaryConditions}

The results on two-state systems in the previous section immediately yield necessary conditions on $N$-density representability
for general pair densities $p_2$ on $\R^6$. To this end, let us introduce the following integrals of $p_2$ associated with
any partitioning of $\R^3$ into two disjoint subsets $\Omega_A$ and $\Omega_B$:
\begin{equation}
	\label{eq:integrals_p2}
	\alpha_{ij} = \int_{\Omega_i\times\Omega_j} p_2, \quad i,j\in\{A,B\}.
\end{equation}
\begin{theorem}
\label{T:NecRep}
Let $p_2$ be any pair density (or measure) on $\R^6$, normalized so that $\int p_2=1$. If $p_2$ is
$N$-density-representable, and $\Omega_A$, $\Omega_B$ is any partitioning of $\R^3$ into two subsets, then the associated $2$-site pair density
\begin{equation}
	\label{eq:assoc2site}
	\alpha_{AA}\delta_{AA} + \alpha_{AB}\delta_{AB} + \alpha_{BA}\delta_{BA} + \alpha_{BB}\delta_{BB},
\end{equation}
with $\alpha_{ij}$ as in \eqref{eq:integrals_p2}, is also $N$-representable, that is to say it belongs to the set $\calP^{N\text{-rep}}(\{A,B\})$ computed explicitly in Theorem~\ref{T:Nrep} and depicted in Figure~\ref{fig:Nrep}.
\end{theorem}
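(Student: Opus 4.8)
The plan is to reduce the statement directly to the definition of $N$-representability on the two-site state space by means of a single coarse-graining (push-forward) map. Since $p_2$ is assumed $N$-density-representable, the definition supplies a symmetric probability measure $p_N$ on $\R^{3N}$ with $p_N\mapsto p_2$. The key device is the measurable map $\pi:\R^3\to\{A,B\}$ that sends every point of $\Omega_A$ to $A$ and every point of $\Omega_B$ to $B$; this is well defined precisely because $\Omega_A,\Omega_B$ partition $\R^3$. I would then form its $N$-fold tensor power $\Pi=\pi^{\otimes N}:\R^{3N}\to\{A,B\}^N$, acting coordinatewise, and push $p_N$ forward to a measure $q_N:=\Pi_* p_N$ on the discrete space $\{A,B\}^N$.

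First I would check that $q_N$ is an admissible competitor, i.e.\ a symmetric probability measure of the form \eqref{eq:Npointmeasures}. It is a probability measure because push-forward preserves total mass, and it lives on $\{A,B\}^N$ by construction. Symmetry follows from the equivariance of $\Pi$ under coordinate permutations together with the symmetry of $p_N$: for any permutation $\sigma$ one has $\sigma_* q_N = (\sigma\circ\Pi)_* p_N = (\Pi\circ\sigma)_* p_N = \Pi_*(\sigma_* p_N) = \Pi_* p_N = q_N$, using the equivariance $\sigma\circ\Pi=\Pi\circ\sigma$ and the invariance $\sigma_* p_N = p_N$.

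The second and central step is to identify the two-point marginal of $q_N$ with the claimed two-site density \eqref{eq:assoc2site}. Here I would use that, for a product map such as $\Pi$, push-forward commutes with marginalization: integrating out coordinates $3,\dots,N$ in the domain corresponds to integrating out the same coordinates in the codomain. Consequently the two-point marginal of $q_N$ equals $(\pi\otimes\pi)_* p_2$, the push-forward of the given pair density $p_2$ under $\pi\otimes\pi$. By the very definition of push-forward, this measure assigns to the point $(i,j)\in\{A,B\}^2$ the mass $\int_{\Omega_i\times\Omega_j} p_2 = \alpha_{ij}$, which is exactly \eqref{eq:assoc2site}. Thus \eqref{eq:assoc2site} arises as the two-point marginal of the symmetric $N$-point probability measure $q_N$, and therefore lies in $\calP^{N\text{-rep}}(\{A,B\})$, as desired.

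I expect the only real care to be needed in the step asserting that marginalization commutes with push-forward, and in phrasing the whole argument measure-theoretically so that it covers the singular (lower-dimensional) $p_2$ that are typical in the SCE limit. Since push-forwards of measures under measurable maps are always well defined, no genuine obstacle arises there, and the remaining bookkeeping (total mass, symmetry, and the evaluation $(\pi\otimes\pi)_* p_2(\{(i,j)\})=\alpha_{ij}$) is routine.
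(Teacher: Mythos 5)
Your proof is correct and is essentially the paper's own argument: the push-forward $\Pi_* p_N$ you construct is precisely the discrete $N$-point measure with coefficients $\alpha_{i_1\cdots i_N}=\int_{\Omega_{i_1}\times\cdots\times\Omega_{i_N}} p_N$ that the paper writes down, and your verification that its two-point marginal is $(\pi\otimes\pi)_* p_2$ with weights $\alpha_{ij}$ just makes explicit what the paper leaves as a one-line assertion. No gaps; the added care about symmetry and about marginalization commuting with the coarse-graining is a welcome elaboration rather than a different route.
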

\begin{proof}
If $p_N$ is an $N$-point density (or measure) on $\R^{3N}$ which represents $p_2$, then the associated 2-site, $N$-point density
\begin{multline}
     \sum_{i_1,\dots,i_N \in \{A,B\}} \alpha_{i_1 \cdots i_N} \, \delta_{i_1}\otimes\cdots\otimes\delta_{i_N}\\
     \text{with} \quad \alpha_{i_1\cdots i_N} = \int_{\Omega_{i_1}\times\cdots\times\Omega_{i_N}} p_N,
\end{multline}
represents the 2-site pair density \eqref{eq:assoc2site}.
\end{proof}

{\bf Example} The smooth anticorrelated pair density \eqref{eq:smoothanticorr} in Example 4 of Section~\ref{sec:NdensityRepr} (see Figure~\ref{fig:smoothanti}) is not 3-representable: choose $\Omega_A$, $\Omega_B$ to be
the half-spaces of $\R^3$ whose boundary bisects the line segment from $A$ to $B$, that is to say 
$\Omega_A=\{\vect{x}\in\R^3\, | \, \vect{x} \cdot (B-A) \le M \cdot (B-A)\}$, 
$\Omega_B=\{\vect{x}\in\R^3\, | \, \vect{x} \cdot (B-A) > M \cdot (B-A)\}$, where $M=(A+B)/2$. By construction, in this case $\alpha_{AA}=\alpha_{BB}=0$, $\alpha_{AB}=\alpha_{BA}=1$, so the associated 2-site pair density
\eqref{eq:assoc2site} is not $3$-density representable, as shown in the previous section (see Figure~\ref{fig:Nrep}).

\medskip

\noindent In fact, for $N=3$, the dual description \eqref{3rep} of the $N$-representable pair densities for the $2$-site system yields the necessary condition of Theorem~\ref{T:NecRep} directly in the form of the following inequality:
\begin{equation}
	\label{eq:nec3rep}
    \int_{\Omega_A\times\Omega_B} p_2 + \int_{\Omega_B\times\Omega_A}p_2 \le 2 \left(\int_{\Omega_A\times\Omega_A}p_2 + \int_{\Omega_B\times\Omega_B}p_2 \right). 
\end{equation} 
Physically, this condition says that the total probability to find a particle pair in the ``anticorrelated'' regions $\Omega_A\times\Omega_B$ and $\Omega_B\times\Omega_A$
can be at most twice as large as the probability to find a pair in the ``ionic'' regions $\Omega_A\times\Omega_A$ and $\Omega_B\times\Omega_B$.

\section{Behaviour of the reduced models on ab initio densities for small atoms}
\label{sec:AtomsFixedDensity}

We now investigate the effect of the hierarchy of representability conditions (Section~\ref{sec:NdensityRepr}) for the atoms He, Li and Be. The ab initio single electron density $\rho$ is obtained from a full configuration interaction (FCI) calculation with Slater-type orbitals (STOs) \cite{AsymptoticsCI2009}. The approximate interaction energy $V_{\ee}^{\SCE,k}$ of an $N$-electron system can be obtained by simulating a fictitious $k$-electron system: directly from Eq.~\eqref{eq:rdmodels}, we have
\begin{equation}
V_{\ee}^{\SCE,k}[\rho] = \frac{\binom{N}{2}}{\binom{k}{2}} V_{\ee}^{\SCE}\left[\tfrac{k}{N} \rho\right].
\end{equation}
The energy on the right hand side is obtained by the same method as in Ref.~\onlinecite{SGS2007}; in particular, the jump surfaces of the maps $\vect{T}_i$ in \eqref{eq:ansatz} are assumed to be concentric spheres.

Fig.~\ref{fig:reprVsce} compares the $k$-density representability approximation $V^{\SCE,k}_{\ee}[\rho]$ with $V^{\SCE}_{\ee}[\rho]$, with the ``exact'' value from the FCI calculation, and the Hartree term $J$. $V^{\SCE}_{\ee}$ underestimates the exact value, whereas $J$ overestimates it. The order-$k$ approximation is already reasonably close to $V^{\SCE}_{\ee}$. The corresponding numerical values are summarized in Table~\ref{tab:reprVsce}.

\begin{figure}[!ht]
\centering
\includegraphics[width=\columnwidth]{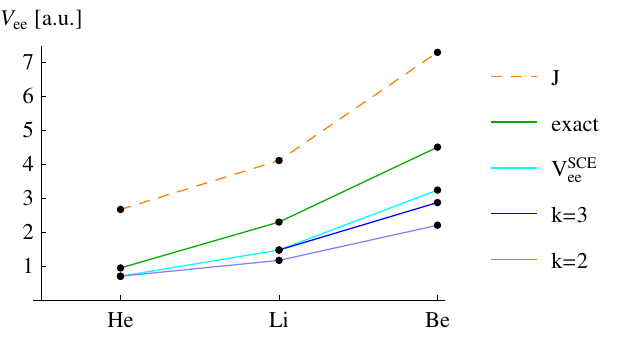}
\caption{$V^{\SCE,k}_{\ee}[\rho]$ obtained by the $k$-density representability approximation (blue) and the true $V^{\SCE}_{\ee}[\rho]$ (cyan). The green curve shows the ``exact'' $\langle\Psi | V_{\ee} | \Psi\rangle$ within an STO FCI ansatz space. The mean-field Hartree term $J$ (orange dashed) overestimates the exact value, as expected. See Tab.~\ref{tab:reprVsce} for the numerical values.}
\label{fig:reprVsce}
\end{figure}

\begin{table}
\centering
\begin{tabular}{c|ccc}
& \text{He} & \text{Li} & \text{Be} \\
\hline
$k = 2$ & $\mathbf{0.711906}$ & 1.17881 & 2.21361 \\
$k = 3$ &    & $\mathbf{1.48426}$ & 2.88229 \\
$k = 4$ &    &    & $\mathbf{3.24853}$ \\
\hline
\text{exact} & 0.954988 & 2.30755 & 4.51366 \\
\hline
\text{J} & 2.67842 & 4.11866 & 7.30589 \\
\end{tabular}
\caption{The calculated values of $V_{\ee}^{\SCE,k}$ using the $k$-density representability approximation (see also Fig.~\ref{fig:reprVsce}). Each diagonal entry is the true $V_{\ee}^{\SCE}$. An ``exact'' (FCI) value and the Hartree term $J$ are also shown, for comparison.}
\label{tab:reprVsce}
\end{table}

%
%

\section{A self-consistent Kohn-Sham computation comparing exact and reduced SCE}
\label{sec:SCF_KS}

The SCE formalism has the potential to become an important ingredient in the design of exchange-correlation functionals for strongly correlated electron systems. Thus we investigate the SCE approach in the context of a Kohn-Sham self-consistent field calculation for atoms with the total energy functional
\begin{equation} \label{eq:DF}
E[\rho] = T_{\mathrm{KS}}[\rho] + V_{\ee}^{\SCE,k}[\rho] + \int v_{\text{ext}}(\vect{x}) \, \rho(\vect{x}) \ud\vect{x}.
\end{equation}
Here $T_{KS}$ is the Kohn-Sham kinetic energy functional and $v_{\text{ext}}$ is the external nuclear potential. Previous self-consistent field calculations with the SCE
functional were carried out in Ref.~\onlinecite{MaletGiorgi2012, ML2013} for a 1D quantum wire, where in the weak confinement regime the SCE functional becomes asymptotically
exact. For 3D atomic systems considered here, replacing $J + E_{\mathrm{xc}}$ by $V_{\ee}^{\SCE,k}$
(or even by the exact SCE functional $V_{\ee}^{\SCE}$) presumably does not yield physically accurate results due to the missing influence of kinetic energy on $\rho_2$, but our calculations illustrate the effect of the $k$-density approximation.

The Kohn-Sham equations define a nonlinear eigenvalue problem
\begin{align}
H[\rho] \psi_i &= \varepsilon_i \psi_i \label{eq:KSE} \\
\rho(\vect{x}) &= \sum_{i=1}^N \abs{\psi_i(\vect{x})}^2, \quad \int \psi_i(\vect{x})^* \, \psi_j(\vect{x}) \ud\vect{x} = \delta_{ij},
\end{align}
where the Hamiltonian $H[\rho]$ itself depends on the density $\rho$. For an atom with nuclear charge $Z$, and the density functional \eqref{eq:DF} with $k=N$ (exact SCE), the single-particle Hamiltonian (in atomic units) reads
\begin{equation}
\label{eq:KSHamiltonian}
H[\rho] = -\frac{1}{2} \Delta - \frac{Z}{\abs{\vect{x}}} + u[\rho].
\end{equation}
The term $-\frac{1}{2} \Delta - {Z}/\abs{\vect{x}}$ is the hydrogen-like single-particle Hamiltonian, and $u[\rho]$ is the Kantorovich potential, i.e., the maximizer of
\eqref{eq:Dual}, which enters because formally, $(\delta V_{\ee}^{\SCE}/\delta\rho)[\rho] = u[\rho]$. Note that changing the potential in \eqref{eq:KSHamiltonian} 
by an additive constant would not change the Kohn-Sham orbitals, but choosing precisely $u[\rho]$ has the virtue that the Kohn-Sham eigenvalues sum to the system energy $E[\rho]$, as is easily inferred from \eqref{eq:KSE}, \eqref{eq:Dual}. 

The Kantorovich potential agrees up to an additive constant with the effective SCE potential $v[\rho]$ constructed in Ref.~\onlinecite{SGS2007}. The latter can be defined by\cite{SGS2007}
\begin{equation}
\label{eq:vGrad}
\nabla v[\rho](\vect{x}) = - \sum_{i=2}^N \frac{\vect{x} - \vect{T}_i(\vect{x})}{\abs{\vect{x} - \vect{T}_i(\vect{x})}^3}, \quad \lim_{\abs{\vect{x}} \to \infty} v[\rho](\vect{x}) = 0.
\end{equation}
The $\vect{T}_i$ are the transport maps in Eq.~\eqref{eq:ansatz}, which determine the positions of the remaining electrons given the position of the first electron, and solely depend on the density $\rho$. The additive constant is now easily obtained from \eqref{eq:Dual}: $u[\rho] = v[\rho]+C$ with
\begin{equation}
\label{eq:constant}
C = \int \frac{\rho(\vect{x})}{N} \sum_{i<j} \frac{1}{\abs{\vect{T}_i(\vect{x}) - \vect{T}_j(\vect{x})}} \ud\vect{x} - \int \rho(\vect{x}) v[\rho](\vect{x}) \ud\vect{x}.
\end{equation}
Following Ref.~\onlinecite{SGS2007} we assume on physical grounds that
\begin{equation}
\label{eq:vAsymptotic}
v[\rho](\vect{x}) \sim \frac{N-1}{\abs{\vect{x}}} \quad\text{as}\quad \abs{\vect{x}} \to \infty,
\end{equation}
even though we do not know of a mathematical proof. For charge-neutral atoms with $N = Z$, the Hamiltonian~\eqref{eq:KSHamiltonian} can be re-written as
\begin{equation}
\label{eq:KSHamiltonianZ1}
H[\rho] = -\frac{1}{2} - \frac{1}{\abs{\vect{x}}} + C + \left(v[\rho](\vect{x}) - \frac{N - 1}{\abs{\vect{x}}}\right),
\end{equation}
such that the last term is expected to decay faster than $1/\abs{\vect{x}}$ due to the asymptotic relation~\eqref{eq:vAsymptotic}.


\begin{figure*}[!ht]
\centering
\subfloat[density]{
\label{fig:KS_SCE_He_density}
\includegraphics[width=0.48\textwidth]{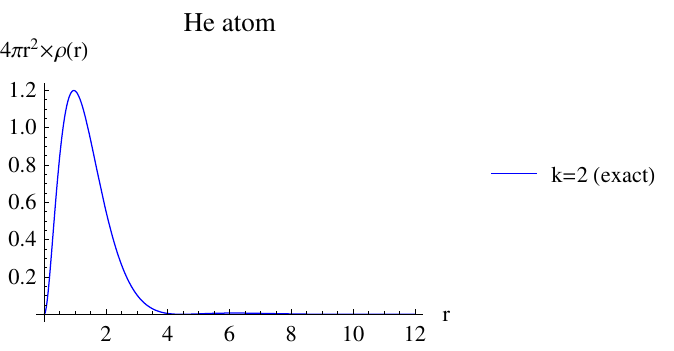}}
\hspace{0.02\textwidth}
\subfloat[potential]{
\label{fig:KS_SCE_He_v}
\includegraphics[width=0.48\textwidth]{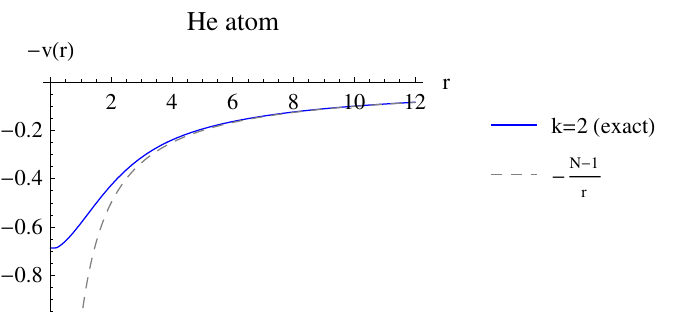}} \\
\subfloat[density]{
\label{fig:KS_SCE_Li_density}
\includegraphics[width=0.48\textwidth]{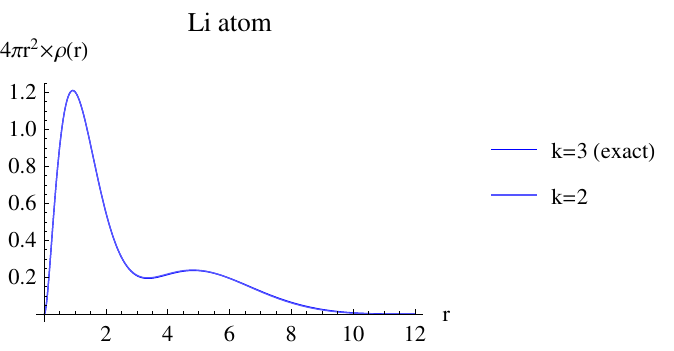}}
\hspace{0.02\textwidth}
\subfloat[potential]{
\label{fig:KS_SCE_Li_v}
\includegraphics[width=0.48\textwidth]{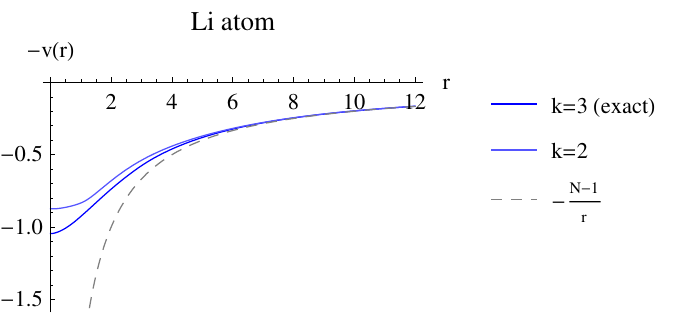}} \\
\subfloat[density]{
\label{fig:KS_SCE_Be_density}
\includegraphics[width=0.48\textwidth]{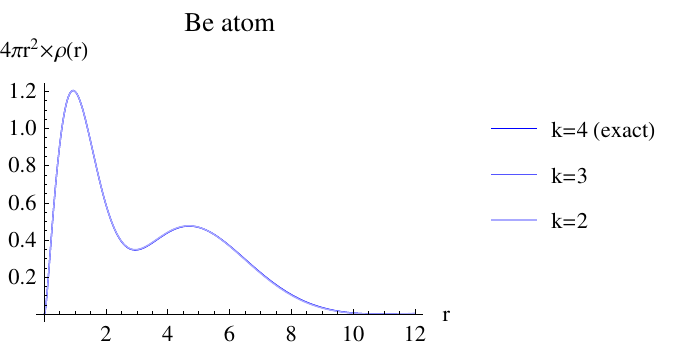}}
\hspace{0.02\textwidth}
\subfloat[potential]{
\label{fig:KS_SCE_Be_v}
\includegraphics[width=0.48\textwidth]{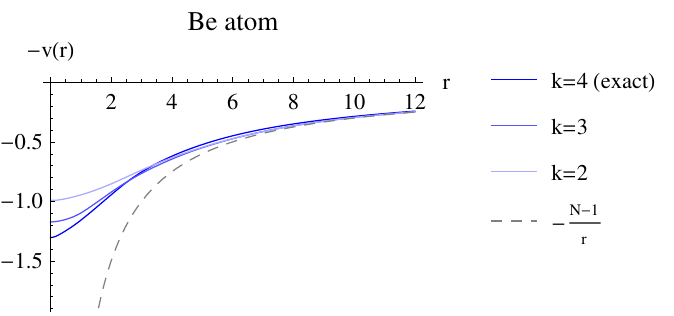}}
\caption{(a, c, e) Radial part of the self-consistent density (blue) for the helium, lithium and beryllium atom with the SCE exchange-correlation functional $E_{\mathrm{xc}} = V_{\ee}^{\SCE,k} - J$. The lighter blue curves correspond to the $k$-density representability approximation, and are visually indiscernible from the (exact) $k = N$ case. (b, d, f) The SCE (alias shifted Kantorovich) potential $v[\rho]$ (blue) corresponding to the self-consistent density on the left, rescaled by $\frac{N-1}{k-1}$. The gray dashed line shows the asymptotic expansion $-\frac{N-1}{\abs{\vect{x}}}$ in Eq.~\eqref{eq:vAsymptotic}.}
\label{fig:KS_SCE_Be}
\end{figure*}


Fig.~\ref{fig:KS_SCE_Be} shows the self-consistent densities and corresponding SCE potential $v$ of helium, lithium and beryllium, for the exact SCE potential as well as its $k$-density approximation (obtained from simulating a fictitious $k$-electron system, see Section~\ref{sec:AtomsFixedDensity}). All densities are normalized to $N$. For each $k$, $v[\rho]$ is rescaled by $\frac{N-1}{k-1}$ to match the asymptotic expansion~\eqref{eq:vAsymptotic}. Table~\ref{tab:KS_SCE} summarizes the numerical Kohn-Sham energy (sum of Kohn-Sham eigenvalues, with doubly occupied orbitals due to spin). Note that the $k$-density representability approximation of the energy is below the true value and increases with $k$.




\begin{table}
\centering
\begin{tabular}{c|ccc}
& \text{He} & \text{Li} & \text{Be} \\
\hline
$k = 2$ & -2.74058 & -5.24439 & -8.12061 \\
$k = 3$ &          & -5.10739 & -7.90984 \\
$k = 4$ &          &          & -7.79889 \\
\end{tabular}
\caption{Kohn-Sham energy (sum of eigenvalues) obtained by a self-consistent field iteration with the SCE Kantorovich potential and Hamiltonian in Eq.~\eqref{eq:KSHamiltonianZ1}.}
\label{tab:KS_SCE}
\end{table}

\section{Example of a minimizing $N$-point density not of SCE form}
\label{sec:MongeCounterexample}

Since much of the theoretical and numerical work on the minimization
problem \eqref{eq:alternative3} relies on the (plausible but nontrivial)  
ansatz \eqref{eq:ansatz}, it is of interest to understand its precise
status with respect to minimization over arbitrary $N$-point probability
measures.  
 
For $N=2$ the ansatz is known to be exact, in the sense that the
minimizing $2$-point probability measure is unique and of the form
\eqref{eq:ansatz} (Ref.~\onlinecite{CFK1}, following
earlier work\cite{GM} in the optimal transportation
literature on pair interactions $v_{\ee}$ which increase with
interparticle distance).
 
For $N>2$, physical arguments\cite{SGS2007} suggest that there
should always exist a minimizing $N$-point probability measure of this
form, so in particular restricting the
minimization in \eqref{eq:alternative3} to $\rho_N$'s of form
\eqref{eq:ansatz} should always give the correct value of the functional
$V_{\ee}^{\SCE}[\rho]$, but there is no rigorous proof of this conjecture.  
 
Finally, there is the question whether for $N>2$ the ansatz
\eqref{eq:ansatz} yields \emph{all} solutions. Here we are not aware of any
convincing arguments (be they physical or mathematical), one way or the
other. The following counterexample demonstrates that for $N>2$,
\eqref{eq:ansatz} does \emph{not} yield all solutions if the Coulomb
interaction is replaced by a negative harmonic oscillator interaction.
This is a new and somewhat surprising effect which only appears when
$N>2$; for $N=2$ the negative harmonic interaction  
was already considered\cite{SGS2007, BPG2012} in connection with the SCE functional and the two interactions were shown to behave in
exactly the same way.
 
Our counterexample does not imply that the ansatz \eqref{eq:ansatz} does
not capture all minimizers of the true Coulombic SCE problem
\eqref{eq:alternative3}, but it
means that if it does, this must be because of some special Coulombic
features.
 
The counterexample is best discussed in the context of recent work in
the optimal transport literature on $N$-body optimal transport problems
in $\R^{d N}$ with a general nonnegative interaction potential or ``cost
function'' $V_{\ee}(\vect{x}_1,\dots,\vect{x}_N)$,
\begin{equation*}
\min_{\rho_N\mapsto\rho} \int_{\R^{d N}}
V_{\ee} \, \rho_N,
\end{equation*}
where the minimization is over $N$-point probability measures on $\R^{dN}$. 
As shown in Ref.~\onlinecite{P}, minimizers have to concentrate on subsets whose
dimension is bounded in terms of the \emph{signatures} (the number of
positive, negative and zero eigenvalues) of certain symmetric matrices
derived from the mixed second order partial derivatives of $V_{\ee}$. Let $G$ be the off diagonal part of the Hessian of $V_{\ee}$. 
More explicitly, if
\begin{equation*}
D^2_{\vect{x}_i \vect{x}_j} V_{\ee} = \Bigg(\frac{\partial^2 V_{\ee}}{\partial x^{\alpha}_i \partial
x^{\beta}_j }\Bigg)_{\alpha \beta}
\end{equation*}
denotes the $d \times d$ matrix of mixed
second order partials with respect
to $\vect{x}_i\in\R^d$ and $\vect{x}_j\in\R^d$, we have
\begin{equation}
\label{eq:G}
G =
\begin{bmatrix}
0 & D^2_{\vect{x}_1\vect{x}_2}V_{\ee} & \dots &D^2_{\vect{x}_1\vect{x}_m}V_{\ee} \\
D^2_{\vect{x}_2\vect{x}_1}V_{\ee} & 0 & \dots &D^2_{\vect{x}_2\vect{x}_m}V_{\ee} \\
\vdots &\vdots &0 &\vdots \\
D^2_{\vect{x}_m\vect{x}_1}V_{\ee} & D^2_{\vect{x}_m\vect{x}_2}V_{\ee} & \dots & 0
\end{bmatrix}.
\end{equation}
Note that $G$ is a block matrix; each entry in the preceding formula
denotes a $d \times d$ block.  Now, as a symmetric $Nd \times Nd$
matrix, $G$ has $N d$ real eigenvalues, counted with multiplicities.  Let
$\lambda_+,\lambda_- $ and $\lambda_0$ denote, respectively, the number
of positive, negative and zero eigenvalues of $G$ at some point
$x = (\vect{x}_1,\vect{x}_2,\dots \vect{x}_N) \in \mathbb{R}^{Nd}$; note that $\lambda_+
+\lambda_- +\lambda_0 = N d$. Then, near $x$, Theorem 2.3 in Ref.~\onlinecite{P}
implies that the \emph{support} of minimizers (the subset on which they are nonzero) is contained in a subset of dimension $\lambda_0+\lambda_-$.
 
For the Coulomb interaction $\sum_{i<j} \abs{\vect{x}_i-\vect{x}_j}^{-1}$, a
straightforward calculation implies that  
\begin{multline*}
D^2_{\vect{x}_i\vect{x}_j}V_{\ee} \\
= \frac{1}{\abs{\vect{x}_i-\vect{x}_j}^3} \left(I-\frac{3}{\abs{\vect{x}_i-\vect{x}_j}^2}(\vect{x}_i-\vect{x}_j)(\vect{x}_i-\vect{x}_j)^T\right),
\end{multline*}
where $I$ is the $d \times d$ identity matrix.  The signature of $G$,
however, may change depending on the point $\vect{x}$.  One can show that
(except at special points)  
$d \leq \lambda_0+\lambda_-\leq (N-1)d$, meaning that the dimension of
the support of the solution can be no more than $(N-1)d$. In particular,
for $N=2$,
this yields an alternative justification of the ansatz \eqref{eq:ansatz}.  
 
While the preceding result is only an upper bound on the dimension, it
is nevertheless a useful guideline for constructing high-dimensional
minimizers, since the $G$-matrix of the
cost must then necessarily have a large number of nonpositive eigenvalues.  
 
For ease of analysis of the $G$-matrix, consider now a cost of pair
potential form, Eq.~\eqref{eq:Vee}, with
$v_{\ee}$ symmetric and quadratic.
The $d\times d$ block $D^2_{\vect{x}_i\vect{x}_j}V_{\ee}$ is then independent of $i$, $j$,
and $\vect{x}$, and the signatures of $G$ can be computed explicitly. The
maximum number of nonpositive eigenvalues
occurs for the negative harmonic oscillator interaction  
\begin{equation}
	\label{eq:negharmosc}
    v_{\ee}(\vect{x}_i,\vect{x}_j) = -\abs{\vect{x}_i-\vect{x}_j}^2.
\end{equation}
In this case each $D^2_{\vect{x}_i\vect{x}_j}V_{\ee} = 2I$, and
\begin{equation*}
G=
\begin{bmatrix}
0 & 2I & \dots &2I \\
2I & 0 & \dots &2I \\
\vdots &\vdots &0 &\vdots \\
2I & 2I & \dots &0
\end{bmatrix}.
\end{equation*}
Now for any $v\in\R^d$, $(v,v,\dots ,v)$ is an eigenvector with
eigenvalue $2(N-1)$, while the vectors  
\begin{equation*}
(v,-v,0, \dots, 0), \, (v,0,-v,0,\dots,0), \dots, (v,0,\dots,0,-v)
\end{equation*}
are eigenvectors with eigenvalue $-2$.  This implies that $\lambda_{-} =
(N-1)d$, $\lambda_{+} = d$ and $\lambda_0 = 0$.  Therefore, minimizers have
at most $(N-1)d$-dimensional support. We now show that this bound is
sharp for this cost function; that is, there actually are minimizers
which are strictly positive on $(N-1)d$-dimensional sets.

\medskip

\noindent {\bf Example} Replace the Coulomb interaction with the negative harmonic oscillator
interaction \eqref{eq:Vee}, \eqref{eq:negharmosc}.  
Let $\tilde{\rho}_N$ be any symmetric measure on $\R^{3N}$ which is
concentrated on the $3(N-1)$-dimensional surface $\{\vect{x}_1+\vect{x}_2+\dots +\vect{x}_N
=0\}$. Then this measure is optimal for the corresponding single
particle density $\rho(\vect{x}_1) = N\int_{\mathbb{R}^{3(N-1)}} \tilde{\rho}_N
(\vect{x}_1,\vect{x}_2,\dots,\vect{x}_N) \ud \vect{x}_2 \ud \vect{x}_3,\dots \ud \vect{x}_N$.

\medskip

\noindent To see why, note that by a simple computation
\begin{multline*}
	V_{\ee} = -\sum_{i<j} \abs{\vect{x}_i-\vect{x}_j}^2\\
	= \frac12 \abs{\vect{x}_1 + \dots + \vect{x}_N}^2 - \frac{N}{2} \sum_{i=1}^N \abs{\vect{x}_i}^2.
\end{multline*}
Hence for any $\rho_N$ with one-body density $\rho$
\begin{equation*}
  \int V_{\ee}\rho_N = \int \abs{\vect{x}_1+\dots+\vect{x}_N}^2 \rho_N + \frac12 \int \abs{\vect{x}_1}^2
\rho(\vect{x}_1) \, \ud\vect{x}_1.
\end{equation*}
The first term is minimized if and only if $\rho_N$ is zero outside the
surface $\vect{x}_1+\dots+\vect{x}_N=0$, and the second term only depends on the one-body
density $\rho$. Since $\tilde{\rho}_N$
vanishes outside this surface, it is a minimizer.

\medskip

The above example is in fact a special case of a result in Ref.~\onlinecite{P}. The
interested reader is encouraged to consult \onlinecite{P} for further results
on the dimension of the support  
of optimizers.

\section{Conclusions and Outlook}
\label{sec:Conclusions}

We have reformulated the strongly correlated limit of density functional theory via ``$N$-density representability'', i.e., the requirement that the pair density
comes from a symmetric $N$-point probability measure. This formulation gives rise to a natural hierarchy of approximate models, in which one
relaxes this requirement to the existence of a representing symmetric $k$-point density with $k<N$. In this paper we have presented a computational method
for the approximate models which is akin to a wavefunction method, in that the representing $k$-point density is resolved. One of the numerical findings we did not anticipate
is the extreme robustness of self-consistent Kohn-Sham densities with respect to the $k$-density approximation. 

For low $k$, a promising route towards extending
our methods to spherically asymmetric systems is the direct computation of the Kantorovich dual potential \cite{ML2013}. 
In the future, if a more direct understanding
of the main constraints on $\rho_2$ implied by $k$-representability can be obtained, one could also envision
a dual approach akin to reduced density matrix methods in which one would solve a constrained linear programming problem for the pair density.

Finally, another interesting issue raised by this work is to clarify the somewhat surprising connection between the SCE formalism and the mean field approximation suggested by our study of the two-site system in Section~\ref{sec:ModelProblem}.


\end{document}